\documentclass[letterpaper, 10pt, conference]{ieeeconf}
\IEEEoverridecommandlockouts

\usepackage{graphics} 
\usepackage{epsfig} 
\usepackage{amsmath} 
\usepackage{amssymb}  
\usepackage[pdfusetitle,hidelinks]{hyperref}
\usepackage{microtype}
\usepackage{color}

\usepackage{amsthm}

\theoremstyle{plain}
\newtheorem{theorem}{Theorem}
\newtheorem{lemma}{Lemma}
\newtheorem{definition}{Definition}
\newtheorem{remark}{Remark}
\newtheorem{assumption}{Assumption}
\newtheorem{proposition}{Proposition}
\newtheorem{corollary}{Corollary}

\usepackage{pgffor}
\usepackage{pgf}

\foreach \x in {a,b,...,z}
{
  \ifcsname R\x\endcsname
    \PackageWarning{myshortcuts}{Collision: \string\csname R\x\endcsname\space already defined; skipping}%
  \else
    \expandafter\xdef\csname R\x\endcsname{%
      \noexpand\ensuremath{\noexpand\mathbb{R}^{\x}}%
    }%
  \fi

  \ifcsname C\x\endcsname
    \PackageWarning{myshortcuts}{Collision: \string\csname C\x\endcsname\space already defined; skipping}%
  \else
    \expandafter\xdef\csname C\x\endcsname{%
      \noexpand\ensuremath{\noexpand\mathbb{C}^{\x}}%
    }%
  \fi

  \foreach \y in {a,b,...,z}
  {
    \ifcsname R\x t\y\endcsname
      \PackageWarning{myshortcuts}{Collision: \string\csname R\x t\y\endcsname\space already defined; skipping}%
    \else
      \expandafter\xdef\csname R\x t\y\endcsname{%
        \noexpand\ensuremath{\noexpand\mathbb{R}^{\x \times \y}}%
      }%
    \fi
  }
}

\foreach \x in {A,B,...,Z,a,b,...,z}
{
  \ifcsname d\x\endcsname
    \PackageWarning{myshortcuts}{Collision: d\x already defined; skipping}%
  \else
    \expandafter\xdef\csname d\x\endcsname{%
      \noexpand\ensuremath{\noexpand\dot{\x}}%
    }%
  \fi

  \ifcsname \x hat\endcsname
    \PackageWarning{myshortcuts}{Collision: \string\csname \x hat\endcsname\space already defined; skipping}%
  \else
    \expandafter\xdef\csname \x hat\endcsname{%
      \noexpand\ensuremath{\noexpand\hat{\x}}%
    }%
  \fi

  \ifcsname \x what\endcsname
    \PackageWarning{myshortcuts}{Collision: \string\csname \x what\endcsname\space already defined; skipping}%
  \else
    \expandafter\xdef\csname \x what\endcsname{%
      \noexpand\ensuremath{\noexpand\widehat{\x}}%
    }%
  \fi

  \ifcsname \x tild\endcsname
    \PackageWarning{myshortcuts}{Collision: \string\csname \x tild\endcsname\space already defined; skipping}%
  \else
    \expandafter\xdef\csname \x tild\endcsname{%
      \noexpand\ensuremath{\noexpand\tilde{\x}}%
    }%
  \fi

  \ifcsname \x wtild\endcsname
    \PackageWarning{myshortcuts}{Collision: \string\csname \x wtild\endcsname\space already defined; skipping}%
  \else
    \expandafter\xdef\csname \x wtild\endcsname{%
      \noexpand\ensuremath{\noexpand\widetilde{\x}}%
    }%
  \fi

  \ifcsname \x b\endcsname
    \PackageWarning{myshortcuts}{Collision: \string\csname \x b\endcsname\space already defined; skipping}%
  \else
    \expandafter\xdef\csname \x b\endcsname{%
      \noexpand\ensuremath{\noexpand\mathbf{\x}}%
    }%
  \fi

  \ifcsname \x bh\endcsname
    \PackageWarning{myshortcuts}{Collision: \string\csname \x bh\endcsname\space already defined; skipping}%
  \else
    \expandafter\xdef\csname \x bh\endcsname{%
      \noexpand\ensuremath{\noexpand\boldhat{\x}}%
    }%
  \fi

  \ifcsname \x ol\endcsname
    \PackageWarning{myshortcuts}{Collision: \string\csname \x ol\endcsname\space already defined; skipping}%
  \else
    \expandafter\xdef\csname \x ol\endcsname{%
      \noexpand\ensuremath{\noexpand\overline{\x}}%
    }%
  \fi

  \ifcsname \x bar\endcsname
    \PackageWarning{myshortcuts}{Collision: \string\csname \x bar\endcsname\space already defined; skipping}%
  \else
    \expandafter\xdef\csname \x bar\endcsname{%
      \noexpand\ensuremath{\noexpand\bar{\x}}%
    }%
  \fi

  \ifcsname \x ul\endcsname
    \PackageWarning{myshortcuts}{Collision: \string\csname \x ul\endcsname\space already defined; skipping}%
  \else
    \expandafter\xdef\csname \x ul\endcsname{%
      \noexpand\ensuremath{\noexpand\underline{\x}}%
    }%
  \fi

  \ifcsname \x p\endcsname
    \PackageWarning{myshortcuts}{Collision: \string\csname \x p\endcsname\space already defined; skipping}%
  \else
    \expandafter\xdef\csname \x p\endcsname{%
      \noexpand\ensuremath{\x^{+}}%
    }%
  \fi

  \ifcsname \x bwhat\endcsname
    \PackageWarning{myshortcuts}{Collision: \string\csname \x bwhat\endcsname\space already defined; skipping}%
  \else
    \expandafter\xdef\csname \x bwhat\endcsname{%
      \noexpand\ensuremath{\noexpand\widehat{\noexpand\mathbf{\x}}}%
    }%
  \fi

  \ifcsname d\x b\endcsname
    \PackageWarning{myshortcuts}{Collision: d\x b already defined; skipping}%
  \else
    \expandafter\xdef\csname d\x b\endcsname{%
      \noexpand\ensuremath{\noexpand\dot{\noexpand\mathbf{\x}}}%
    }%
  \fi

  \ifcsname d\x hat\endcsname
    \PackageWarning{myshortcuts}{Collision: d\x hat already defined; skipping}%
  \else
    \expandafter\xdef\csname d\x hat\endcsname{%
      \noexpand\ensuremath{\noexpand\dot{\noexpand\hat{\x}}}%
    }%
  \fi

  \ifcsname d\x tild\endcsname
    \PackageWarning{myshortcuts}{Collision: d\x tild already defined; skipping}%
  \else
    \expandafter\xdef\csname d\x tild\endcsname{%
      \noexpand\ensuremath{\noexpand\dot{\noexpand\tilde{\x}}}%
    }%
  \fi

  \ifcsname \x hatp\endcsname
    \PackageWarning{myshortcuts}{Collision: \string\csname \x hatp\endcsname\space already defined; skipping}%
  \else
    \expandafter\xdef\csname \x hatp\endcsname{%
      \noexpand\ensuremath{\noexpand\hat{\x}^{+}}%
    }%
  \fi

  \ifcsname \x tildp\endcsname
    \PackageWarning{myshortcuts}{Collision: \string\csname \x tildp\endcsname\space already defined; skipping}%
  \else
    \expandafter\xdef\csname \x tildp\endcsname{%
      \noexpand\ensuremath{\noexpand\tilde{\x}^{+}}%
    }%
  \fi
}

\foreach \x in {A,B,...,Z}
{
  \ifcsname \x cal\endcsname
    \PackageWarning{myshortcuts}{Collision: \expandafter\string\csname \x cal\endcsname\space already defined; skipping}%
  \else
    \expandafter\xdef\csname \x cal\endcsname{%
      \noexpand\ensuremath{\noexpand\mathcal{\x}}%
    }%
  \fi

  \ifcsname \x caltild\endcsname
    \PackageWarning{myshortcuts}{Collision: \expandafter\string\csname \x caltild\endcsname\space already defined; skipping}%
  \else
    \expandafter\xdef\csname \x caltild\endcsname{%
      \noexpand\ensuremath{\noexpand\tilde{\noexpand\mathcal{\x}}}%
    }%
  \fi

  \ifcsname \x calwtild\endcsname
    \PackageWarning{myshortcuts}{Collision: \expandafter\string\csname \x calwtild\endcsname\space already defined; skipping}%
  \else
    \expandafter\xdef\csname \x calwtild\endcsname{%
      \noexpand\ensuremath{\noexpand\widetilde{\noexpand\mathcal{\x}}}%
    }%
  \fi

  \ifcsname \x calhat\endcsname
    \PackageWarning{myshortcuts}{Collision: \expandafter\string\csname \x calhat\endcsname\space already defined; skipping}%
  \else
    \expandafter\xdef\csname \x calhat\endcsname{%
      \noexpand\ensuremath{\noexpand\hat{\noexpand\mathcal{\x}}}%
    }%
  \fi

  \ifcsname \x calwhat\endcsname
    \PackageWarning{myshortcuts}{Collision: \expandafter\string\csname \x calwhat\endcsname\space already defined; skipping}%
  \else
    \expandafter\xdef\csname \x calwhat\endcsname{%
      \noexpand\ensuremath{\noexpand\widehat{\noexpand\mathcal{\x}}}%
    }%
  \fi

  \expandafter\xdef\csname \x\endcsname{%
    \noexpand\ensuremath{\noexpand\mathbb{\x}}%
  }%
}

\foreach \g in {alpha,beta,gamma,delta,epsilon,varepsilon,zeta,eta,theta,vartheta,iota,kappa,lambda,mu,nu,xi,pi,varpi,rho,varrho,sigma,varsigma,tau,upsilon,phi,varphi,chi,psi,omega,Gamma,Delta,Theta,Lambda,Xi,Pi,Sigma,Upsilon,Phi,Psi,Omega}
{
  \ifcsname \g hat\endcsname
    \PackageWarning{myshortcuts}{Collision: \string\csname \g hat\endcsname\space already defined; skipping}%
  \else
    \expandafter\xdef\csname \g hat\endcsname{%
      \noexpand\ensuremath{\noexpand\hat{\csname\g\endcsname}}%
    }%
  \fi

  \ifcsname \g what\endcsname
    \PackageWarning{myshortcuts}{Collision: \string\csname \g what\endcsname\space already defined; skipping}%
  \else
    \expandafter\xdef\csname \g what\endcsname{%
      \noexpand\ensuremath{\noexpand\widehat{\csname\g\endcsname}}%
    }%
  \fi

  \ifcsname \g tild\endcsname
    \PackageWarning{myshortcuts}{Collision: \string\csname \g tild\endcsname\space already defined; skipping}%
  \else
    \expandafter\xdef\csname \g tild\endcsname{%
      \noexpand\ensuremath{\noexpand\tilde{\csname\g\endcsname}}%
    }%
  \fi

  \ifcsname \g wtild\endcsname
    \PackageWarning{myshortcuts}{Collision: \string\csname \g tild\endcsname\space already defined; skipping}%
  \else
    \expandafter\xdef\csname \g wtild\endcsname{%
      \noexpand\ensuremath{\noexpand\widetilde{\csname\g\endcsname}}%
    }%
  \fi

  \ifcsname \g ol\endcsname
    \PackageWarning{myshortcuts}{Collision: \string\csname \g ol\endcsname\space already defined; skipping}%
  \else
    \expandafter\xdef\csname \g ol\endcsname{%
      \noexpand\ensuremath{\noexpand\overline{\csname\g\endcsname}}%
    }%
  \fi

  \ifcsname \g bar\endcsname
    \PackageWarning{myshortcuts}{Collision: \string\csname \g bar\endcsname\space already defined; skipping}%
  \else
    \expandafter\xdef\csname \g bar\endcsname{%
      \noexpand\ensuremath{\noexpand\bar{\csname\g\endcsname}}%
    }%
  \fi

  \ifcsname \g ul\endcsname
    \PackageWarning{myshortcuts}{Collision: \string\csname \g ul\endcsname\space already defined; skipping}%
  \else
    \expandafter\xdef\csname \g ul\endcsname{%
      \noexpand\ensuremath{\noexpand\underline{\csname\g\endcsname}}%
    }%
  \fi

  \ifcsname d\g\endcsname
    \PackageWarning{myshortcuts}{Collision: d\g already defined; skipping}%
  \else
    \expandafter\xdef\csname d\g\endcsname{%
      \noexpand\ensuremath{\noexpand\dot{\csname\g\endcsname}}%
    }%
  \fi

  \ifcsname \g p\endcsname
    \PackageWarning{myshortcuts}{Collision: \string\csname \g p\endcsname\space already defined; skipping}%
  \else
    \expandafter\xdef\csname \g p\endcsname{%
      \noexpand\ensuremath{\csname\g\endcsname^{+}}%
    }%
  \fi

  \ifcsname d\g hat\endcsname
    \PackageWarning{myshortcuts}{Collision: d\g hat already defined; skipping}%
  \else
    \expandafter\xdef\csname d\g hat\endcsname{%
      \noexpand\ensuremath{\noexpand\dot{\noexpand\hat{\csname\g\endcsname}}}%
    }%
  \fi

  \ifcsname d\g tild\endcsname
    \PackageWarning{myshortcuts}{Collision: d\g tild already defined; skipping}%
  \else
    \expandafter\xdef\csname d\g tild\endcsname{%
      \noexpand\ensuremath{\noexpand\dot{\noexpand\tilde{\csname\g\endcsname}}}%
    }%
  \fi

  \ifcsname \g hatp\endcsname
    \PackageWarning{myshortcuts}{Collision: \string\csname \g hatp\endcsname\space already defined; skipping}%
  \else
    \expandafter\xdef\csname \g hatp\endcsname{%
      \noexpand\ensuremath{\noexpand\hat{\csname\g\endcsname}^{+}}%
    }%
  \fi

  \ifcsname \g tildp\endcsname
    \PackageWarning{myshortcuts}{Collision: \string\csname \g tildp\endcsname\space already defined; skipping}%
  \else
    \expandafter\xdef\csname \g tildp\endcsname{%
      \noexpand\ensuremath{\noexpand\tilde{\csname\g\endcsname}^{+}}%
    }%
  \fi
}

\newcommand{\Rgzero}{\mathbb{R}_{\geq 0}}

\newcommand{\Zgzero}{\mathbb{Z}_{\geq 0}}

\newcommand{\Ball}{\mathbb{B}}

\newcommand{\mtset}{\emptyset}

\newcommand{\SH}{\mathcal{S}_\mathcal{H}}

\newcommand{\eps}{\varepsilon}

\DeclareMathOperator*{\cch}{\overline{\text{co}}}

\DeclareMathOperator*{\dom}{dom}

\DeclareMathOperator*{\graph}{graph}
\DeclareMathOperator*{\range}{rge}

\DeclareMathOperator*{\olim}{omega}

\newcommand{\Pspace}{(\Omega,\mathcal{F},\mathbb{P})}
\newcommand{\Exc}[2]{\mathbb{E}\left[ #1 \, | \, #2 \right]}

\newcommand{\tto}{\rightrightarrows}

\newcommand{\inner}[1]{\langle#1\rangle}
\newcommand{\ol}[1]{\overline{#1}}

\newcommand{\jmol}{\ol{\jmath}}

\newcommand{\proj}{\mathrm{proj}}

\let\inf\relax \DeclareMathOperator*\inf{\vphantom{p}inf}

\newcommand{\Rnf}{\mathbb{R}^{n_f}}
\newcommand{\Rns}{\mathbb{R}^{n_s}}
\newcommand{\jk}{\overline{\jmath}_k}
\newcommand{\ji}{\overline{\jmath}_i}
\newcommand{\kj}{\overline{k}_j}
\newcommand{\HTT}{\mathcal{H}_{\mathrm{TT}}^{1}}
\newcommand{\HTTe}{\mathcal{H}_{\mathrm{TT}}^{\eps}}
\newcommand{\HBL}{\mathcal{H}_{\mathrm{BL}}}
\newcommand{\HBLK}{\mathcal{H}_{\mathrm{BL}}^{K}}
\newcommand{\OmegaBLK}{\Omega_{\mathrm{BL}}^{K}}

\newcommand{\HRL}{\mathcal{H}_{\mathrm{R}}^{\Lambda}}
\newcommand{\CR}{C_{\mathrm{R}}^{\Lambda}}
\newcommand{\FR}{F_{\mathrm{R}}^{\Lambda}}
\newcommand{\DR}{D_{\mathrm{R}}^{\Lambda}}
\newcommand{\GR}{G_{\mathrm{R}}^{\Lambda}}
\newcommand{\LRC}{\Lambda_{\mathrm{R}}^{C}}
\newcommand{\LRD}{\Lambda_{\mathrm{R}}^{D}}

\newcommand{\HHB}{\mathcal{H}_{\mathrm{HB}}}
\newcommand{\CH}{C_{\mathrm{HB}}}
\newcommand{\fH}{F_{\mathrm{HB}}}
\renewcommand{\DH}{D_{\mathrm{HB}}}
\newcommand{\gH}{G_{\mathrm{HB}}}

\newcommand{\HTHBe}{\mathcal{H}_{\mathrm{2HB}}^{\varepsilon}}
\newcommand{\HTHB}{\mathcal{H}_{\mathrm{2HB}}^{1}}
\newcommand{\FTHBe}{F_{\mathrm{2HB}}^{\varepsilon}}
\newcommand{\FTHB}{F_{\mathrm{2HB}}^{1}}
\newcommand{\GTHB}{G_{\mathrm{2HB}}}

\title{\LARGE \bf Two-Timescale Asymptotic Simulations of Hybrid Inclusions with Applications to Stochastic Hybrid Optimization}

\author{Max F. Crisafulli and Andrew R. Teel
\thanks{Max F. Crisafulli and Andrew R. Teel are with the ECE Department, University of California, Santa Barbara, CA 93106-9560, USA. (maxcrisafulli@ucsb.edu, teel@ucsb.edu). Research supported in part by ARO grant number W911NF-26-1-0001 and AFOSR grant number FA9550-25-1-0186.}}

\begin{document}
\maketitle
\thispagestyle{empty}
\pagestyle{empty}

\begin{abstract}
   Convergence properties of model-free two-timescale asymptotic simulations of singularly perturbed hybrid inclusions are developed.
A hybrid inclusion combines constrained differential and difference inclusions to capture continuous (flow) and discrete (jump) dynamics, respectively. Sufficient conditions are established under which sequences of iterates and step sizes constitute a two-timescale asymptotic simulation of such a system, with limiting behavior characterized via weakly invariant and internally chain-transitive sets of an associated boundary layer and reduced system.
To illustrate the applicability of these results, conditions are given under which a two-timescale stochastic approximation of a hybrid optimization algorithm asymptotically recovers the behavior of its deterministic counterpart.
\end{abstract}

\section{Introduction}
This work extends the notion of asymptotic simulations developed for hybrid inclusions in \cite{GoebelChainTransitivity2026} and \cite{AsymptoticSim_CDC26} to the two-timescale setting. This development was itself motivated by the notion of asymptotic pseudotrajectories developed in \cite{BenaimAPT1996} and \cite{BenaimSA2005}, where convergence properties of stochastic approximations of differential inclusions were characterized in terms of internally chain-transitive sets of an associated differential inclusion.

In contrast to existing results such as \cite{TeelARC2025}, which relied on Lyapunov-based arguments, or \cite{TeelCDC2024}, which imposed mild graphical conditions on the data of an explicit simulator and underlying hybrid inclusion, the results presented here characterize limiting behavior entirely in terms of the weakly invariant and internally chain transitive sets of the underlying hybrid inclusion being approximated.

The existing literature on two-timescale stochastic approximation, notably \cite{BorkarTTSA1997} and its recent extension \cite{BorkarTTSAGeneral}, has thus far been restricted to the setting of differential inclusions. The present work extends these results to hybrid inclusions, a broader class of dynamical systems. This extension is motivated by potential applications to the stochastic approximation of reset-based or event-triggered optimization algorithms, an example of which, see \cite{LeHHB2021}, is presented in Section~\ref{sec:example}.

A further advantage of the framework developed here is that it does not rely on the specification of an explicit simulator model. Rather, the analysis provides conditions directly on a sequence of iterates and their associated step sizes, under which they constitute an asymptotic simulation of the limiting hybrid inclusion.
Furthermore, a key motivation in the development of these results is that they enable the approximation of hybrid systems whose flow and jump sets depend on states that may not be directly available or may be corrupted by noise.

To illustrate the applicability of the developed results, we present an example in which a deterministic hybrid optimization algorithm, developed in \cite{LeHHB2021}, is implemented with stochastic gradient descent in a two-timescale manner so as to asymptotically recover its deterministic behavior.
For clarity, proofs and their supporting lemmas are deferred to the appendix.

\emph{Notation:} $\Rgzero$ ($\Zgzero$) is the nonnegative real numbers (integers). For $\ell \in \Zgzero$, $\Z_{\geq \ell}$ is the set of integers that are greater than or equal to $\ell$. The closed unit ball centered at the origin is denoted $\Ball \subset \Rn$ and, given $\rho > 0$, $\rho\Ball$ denotes the closed ball of radius $\rho$. Given a point $x \in \Rn$ and a set $\Acal \subset \Rn$, let $|x|_{\Acal} := \inf_{y \in \Acal}|x-y|$. Given two vectors $v,w$, let $(v,w) := (v^\top,w^\top)^\top$, i.e., the stack of $v$ and $w$. We adopt the set-valued terminology and notation of \cite{VariationalAnalysis}. The graph of a set-valued mapping $S : \Rn \tto \Rm$ is defined as $\graph(S) := \left\{(x,y) \in \Rn \times \Rm : y \in S(x)\right\}$.

\section{Hybrid Inclusions}
A hybrid inclusion with state $x \in \Rn$ is represented by
\begin{equation}\label{eq:hsys}
    \begin{aligned}
        x &\in C \quad \dx \in F(x) \\
        x &\in D \quad \xp \in G(x).
    \end{aligned}
\end{equation}
The set $C \subset \Rn$ is called the flow set, the set-valued mapping $F : \Rn \tto \Rn$ is called the flow map, the set $D \subset \Rn$ is called the jump set, and the set-valued mapping $G : \Rn \tto \Rn$ is called the jump map. We refer to \eqref{eq:hsys} as $\Hcal$, with \emph{data} $\Hcal = (C,F,D,G)$.
A hybrid arc $\psi : \dom\psi \to \Rn$ is a \emph{solution} to $\Hcal$ if $\psi(0,0) \in C\cup D$, and
\begin{itemize}
    \item for every $([t_1,t_2],j) \subset \dom\psi$ with $t_1 < t_2$, $\psi(t,j) \in C$ and $\dpsi(t,j) \in F(\psi(t,j))$ for almost all $t \in [t_1,t_2]$;
    \item for every $(t,j),(t,j+1) \in \dom\psi$, $\psi(t,j) \in D$ and $\psi(t,j+1) \in G(\psi(t,j))$.
\end{itemize}
The set of all solutions to $\Hcal$ is denoted $\SH$, and each such solution $\psi \in \SH$ is called an $\SH$-arc. The set of solutions $\psi\in\SH$ with $\psi(0,0) \in X \subset \Rn$ is denoted $\SH(X)$. A solution $\psi \in \SH$ is \emph{complete} if $\dom\psi$ is unbounded, and \emph{maximal} if it cannot be extended. To ensure the regularity of solutions to $\Hcal$, we impose the following assumption.

\begin{assumption}{\cite[Assumption~6.5]{HDS_Book}}
\label{ass:basic_conditions}
    The hybrid system $\Hcal = (C,F,D,G)$ satisfies the \emph{Hybrid Basic Conditions} if:
    \begin{itemize}
        \item The sets $C,D \subset \Rn$ are closed.
        \item The mappings $F,G : \Rn \tto \Rn$ are outer semicontinuous and locally bounded.
        \item For each $x \in C$, $F(x)$ is nonempty and convex.
        \item For each $x \in D$, $G(x)$ is nonempty.
    \end{itemize}
\end{assumption}

\subsection{Weak Invariance and Internal Chain Transitivity}
The following definitions are used to characterize the asymptotic behavior of solutions to $\Hcal = (C,F,D,G)$, whose data is assumed to satisfy Assumption~\ref{ass:basic_conditions}. Weak invariance of compact sets for a hybrid system $\Hcal$ is recalled here succinctly; see \cite[Def.~6.19]{HDS_Book} for a more detailed treatment.

\begin{definition}
    A compact set $K \subset \Rn$ is said to be \emph{weakly $\SH$-invariant} if, for all $x \in K$ and $T > 0$, there exists a complete $\psi \in \SH$ and $(s,i)\in\dom\psi$, with $s + i \geq T$, such that $\psi(t,j) \in K$ for all $(t,j) \in \dom\psi$ and $\psi(s,i) = x$.
\end{definition}

The notion of $(\tau,\eps)$-$\SH$-chains and chain recurrence for hybrid systems was introduced in \cite{GoebelDirectProof2023}, and later generalized to chain transitivity for collections of hybrid arcs in \cite{{GoebelChainTransitivity2026}}. We recall the basics here.

\begin{definition}
    Given two points $x,y \in \Rn$ and $\tau,\eps > 0$, a \emph{$(\tau,\eps)$-$\SH$-chain} from $x$ to $y$ consists of finite sequences of points $x = x_0,x_1,\ldots,x_{k^*} = y$ in $\Rn$ and solutions $\psi_0,\psi_1,\ldots,\psi_{k^*-1} \in \SH$ such that, for $0 \leq k < k^{*}$,  $\psi_k(0,0) = x_k$ and $x_{k+1} \in \psi_k(t_k,j_k) + \eps\Ball$ for some $(t_k,j_k) \in \dom \psi_k$ with $t_k + j_k \geq \tau$.
    When considering a nonempty and closed set $K \subset \Rn$ and given two points $x,y \in K$ and $\tau,\eps > 0$, an \emph{internal $(\tau,\eps)$-$\SH$-chain from $x$ to $y$} is a $(\tau,\eps)$-chain from $x$ to $y$ such that $\psi_k(t,j) \in K$ for all $(t,j) \in \dom \psi_k$ with $0 \leq t + j \leq t_k + j_k$ for $0 \leq k < k^*$.
\end{definition}

\begin{definition}
    A set $K \subset \Rn$ is \emph{internally $\SH$-chain transitive} if, for every $x,y \in K$ and every $\tau,\eps > 0$, there exists an internal $(\tau,\eps)$-$\SH$-chain from $x$ to $y$.
\end{definition}

\section{Asymptotic Simulations of Hybrid Inclusions}
We recall here relevant content from \cite[Sec.~5]{GoebelChainTransitivity2026}. Although the present work extends the single-timescale results of \cite{GoebelChainTransitivity2026} to the two-timescale setting for singularly perturbed hybrid systems, the single-timescale formulation remains essential; in particular, it is used in the asymptotic simulation analysis of both the boundary layer and reduced systems that arise in the study of the two-timescale system.

\subsection{Hybrid Sequences}
A set $E \subset \Zgzero^2$ is a \emph{compact hybrid sequence domain} if 
\[E = \bigcup_{j=0}^{J-1}\Bigl(\{k_j,\ldots,k_{j+1}\}\times\{j\}\Bigr),\]
where $J \in \N$ and $0 = k_0 \leq k_1 \leq \cdots \leq k_J$ form a finite sequence of integers. It is a \emph{hybrid sequence domain} if it is the union of a nondecreasing sequence of compact hybrid sequence domains.

A mapping $\phi : \dom\phi \to \Rn$ is a \emph{hybrid sequence} if its domain is a hybrid sequence domain. It is \emph{complete} if its domain is unbounded. It is \emph{complete in the $k$-direction} if the set of all $k$'s defining its domain is unbounded and \emph{complete in the $j$-direction} if the set of all $j$'s defining its domain is unbounded. Given a hybrid sequence $\phi$, let 
\begin{equation}\label{eq:jk_and_kj}
    \begin{aligned}
        \jk &:= \inf\{j \in \Zgzero : (k+1,j) \in \dom \phi\}  \\ 
        \kj &:= \inf\{k \in \Zgzero : (k,j+1) \in \dom \phi\}
    \end{aligned}
\end{equation}
with the convention that $\inf(\mtset) = \infty$.
As noted in \cite{GoebelChainTransitivity2026}, if $\phi$ is complete in the $k$-direction then the value $\jk$ is finite for each $k \in \Zgzero$ and $(k,j),(k+1,j) \in \dom\phi$ if and only if $j = \jk$. Similarly, if $\phi$ is complete in the $j$-direction then the value $\kj$ is finite for every $j \in \Zgzero$ and $(k,j),(k,j+1) \in \dom\phi$ if and only if $k = \kj$.
To facilitate an efficient description of asymptotic simulations, define, given a system $\Hcal = (C,F,D,G)$, the mappings
\[F_C(x) := \begin{cases}F(x), &x \in C \\ \mtset, &x \notin C\end{cases}, \quad G_D(x) := \begin{cases}G(x), &x \in D \\ \mtset, &x \notin D.\end{cases}\]
Note that
\[
    \begin{aligned}
        \graph(F_C) &= \graph(F)\cap(C\times\Rn),  \\
        \graph(G_D) &= \graph(G)\cap(D\times\Rn).
    \end{aligned}
\]

\subsection{Asymptotic Simulations}
An asymptotic simulation of $\Hcal$, denoted by the pair $(\phi,\{h_k\}_{k=1}^\infty)$, includes a hybrid sequence $\phi$ and a sequence of converging, positive step sizes $\{h_k\}_{k=1}^\infty$ that play a role in approximating the flows of a hybrid inclusion. An analogous concept, the \emph{asymptotic pseudo-trajectory}, was first developed in \cite{BenaimAPT1996} and later applied to the analysis of stochastic approximations of differential inclusions in \cite{BenaimSA2005}.

\begin{definition} 
\label{def:admissible}
A sequence of step sizes $\{h_k\}_{k=1}^\infty$ is said to be \emph{admissible} if $h_{k} > 0$ for each $k \in \Z_{\geq 1}$
and the sequence converges to zero but is not summable. Given an admissible sequence of step sizes, define
\[\tau_k := \sum_{i=0}^{k-1}h_{i+1} \quad \forall k \in \Zgzero,\]
and $m(t) := \max\{k \in \Zgzero : \tau_k \leq t\}$ for all $t \in \Rgzero$.
\end{definition}
The following definition is taken from \cite{GoebelChainTransitivity2026}, and introduces the notion of a (single-timescale) asymptotic simulation.
\begin{definition}\label{def:asymptotic_simulation}
    $(\phi,\{h_k\}_{k=1}^\infty)$ is an \emph{asymptotic simulation} of $\Hcal$ if $\phi$ is a bounded, complete hybrid sequence, $\{h_k\}_{k=1}^\infty$ is an admissible sequence of step sizes, and the following properties hold:
    \begin{enumerate}
        \item If $\phi$ is complete in the $k$-direction then there exists a bounded sequence $\{f_k\}_{k=0}^\infty$ of vectors in $\Rn$ such that 
        \begin{equation}\label{eq:asymptotic_sim_F_containment}
            \limsup_{k \to \infty}\left(\phi(k,\jk),f_k\right) \subset \graph(F_C),
        \end{equation}
        and, with the definition 
        \[\fwhat_{k+1}:=\frac{\phi(k+1,\jk) - \phi(k,\jk)}{h_{k+1}} \quad \forall k \in \Zgzero,\]
        the following limit holds for each $T > 0$: 
        \begin{equation}\label{eq:asymptotic_sim_closeness}
            \lim_{n \to \infty}\sup_{n+1 \leq k \leq m(\tau_n+T)}\left|\sum_{i=n}^{k-1}h_{i+1}(\fwhat_{i+1}-f_i)\right| = 0.
        \end{equation}
        \item If $\phi$ is complete in the $j$-direction then 
        \begin{equation}\label{eq:asymptotic_sim_G_containment}
            \limsup_{j \to \infty}\left(\phi(\kj,j),\phi(\kj,j+1)\right) \subset \graph(G_D).
        \end{equation}
    \end{enumerate}
\end{definition}
The $\limsup$ appearing in \eqref{eq:asymptotic_sim_F_containment} and \eqref{eq:asymptotic_sim_G_containment} is the \emph{outer limit}, i.e., the set of all accumulation points of the considered sequence; see \cite[Ch.~4.A]{VariationalAnalysis} for a rigorous definition.

It can be verified that a hybrid inclusion implementing a forward Euler approximation of the flows of $\Hcal$ constitutes an asymptotic simulation of $\Hcal$, as illustrated for the two-timescale setting in Remark~\ref{remark:TTmodel}.
The $\omega$-limit set of a hybrid sequence $\phi$ is the set $\omega(\phi)$ defined as
\[
    \lim_{i \to \infty}\bigl\{z \in \Rn : z = \phi(k,j),\; (k,j) \in \dom \phi,\; k+j \geq i\bigr\}.
\]
This set is closed, and if $\phi$ is complete and bounded, it is nonempty, bounded, and thus compact, with the property that $\phi$ converges to $\omega(\phi)$. With the assumption that $(\phi,\{h_k\}_{k=1}^\infty)$ is indeed an asymptotic simulation of $\Hcal$, the following statement can be made about its $\omega$-limit set. 

\begin{proposition}\label{prop:aymptotic_sim_omega}
    Let the system $\Hcal$ satisfy Assumption~\ref{ass:basic_conditions} and $(\phi,\{h_k\}_{k=1}^\infty)$ be an asymptotic simulation of $\Hcal$. Then, $\omega(\phi)$ is a nonempty and compact set that is both weakly $\SH$-invariant and internally $\SH$-chain transitive.
\end{proposition}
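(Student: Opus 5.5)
The plan follows the asymptotic pseudotrajectory approach of \cite{BenaimAPT1996,BenaimSA2005}, adapted to hybrid time. The proposition is essentially the single-timescale result of \cite[Sec.~5]{GoebelChainTransitivity2026}, so I would argue along those lines.

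Nonemptiness and compactness are already noted above, since $\phi$ is complete and bounded. The core is a \emph{limit-arc lemma}. Embed the tail of $\phi$ into hybrid time by interpolation. For each $(k,j)\in\dom\phi$, build a hybrid arc $\phi^{(k,j)}$ as follows:
\begin{itemize}
\item flow steps $(i,\ji)\to(i+1,\ji)$ occupy the interval $[\tau_i-\tau_k,\tau_{i+1}-\tau_k]$ and are linearly interpolated;
\item jump steps $(\kj,j)\to(\kj,j+1)$ become jumps.
\end{itemize}
The lemma then states: for any sequence $(k_n,j_n)$ with $k_n+j_n\to\infty$ and $\phi(k_n,j_n)\to x$, a subsequence of the shifted arcs converges graphically, uniformly on compact hybrid-time windows, to a complete $\psi\in\SH(\{x\})$ with $\psi$ taking values in $\omega(\phi)$.

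To prove the lemma I would split by the completeness directions.
\begin{itemize}
\item \textbf{Flow intervals.} Write the increment as $\sum h_{i+1}f_i$ plus an error. By \eqref{eq:asymptotic_sim_closeness} the error vanishes uniformly on windows of length $T$. The $f_i$ are bounded, so the interpolants are equi-Lipschitz, and Arzel\`a--Ascoli gives a limit on flow intervals. The derivative of the limit is a weak limit of averages of $f_i$. Using \eqref{eq:asymptotic_sim_F_containment}, outer semicontinuity, and convexity of $F$ on $C$ (a standard convergence theorem for differential inclusions), the limit satisfies $\dot\psi\in F(\psi)$ with $\psi\in C$.
\item \textbf{Jumps.} Pairs $(\phi(\kj,j),\phi(\kj,j+1))$ accumulate in $\graph(G_D)$ by \eqref{eq:asymptotic_sim_G_containment}, so limit jumps are valid.
\item \textbf{Number of jumps.} Finitely many jumps occur in bounded hybrid time, because the hybrid time $t+j$ of the arc tracks $\tau$-time plus jump count. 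Hence graphical convergence follows by the hybrid sequential compactness theorem \cite{HDS_Book}.
\end{itemize}
Completeness of $\psi$ holds because $\phi$ is complete, so the shifted arcs have hybrid time lengths tending to infinity. Values lie in $\omega(\phi)$ because every limit point of tail values belongs to it.

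\textbf{The main obstacle} is the bookkeeping between step index and hybrid time. When $h_k$ is small, a single $k$-step carries negligible time. The hybrid time lengths of the arcs must therefore be measured by $\tau$-time plus $j$, not by $k+j$, and I need to ensure that $k_n+j_n\to\infty$ still produces arcs whose hybrid time domain grows. This uses non-summability of $h_k$ when $k\to\infty$, and $j\to\infty$ otherwise. A related subtlety is that many consecutive jumps can occur at the same $k$; this is handled by convergence of jump pairs.

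\textbf{Weak invariance.} Given $x\in\omega(\phi)$ and $T>0$, choose $(k_n,j_n)$ with $\phi(k_n,j_n)\to x$. Apply the lemma to the shifted indices that lie hybrid time $\geq T$ before $(k_n,j_n)$; such indices exist because $\tau$ and $j$ are unbounded in the tail. The limit arc starts at a point of $\omega(\phi)$, stays in $\omega(\phi)$, is complete, and passes through $x$ at some $(s,i)$ with $s+i\ge T$.

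\textbf{Internal chain transitivity.} Fix $x,y\in\omega(\phi)$ and $\tau,\eps>0$. Since $\phi$ converges to $\omega(\phi)$ and returns infinitely often near both $x$ and $y$, take a long tail segment from near $x$ to near $y$ and cut it into pieces of hybrid duration about $\tau$. By the lemma and a contradiction argument, each piece, once sufficiently far in the tail, is $\eps/3$-close to an $\SH$-arc lying in $\omega(\phi)$ and starting at the projection of the piece's initial point. Concatenating these arcs, and correcting the endpoint mismatches by $\eps$, gives an internal $(\tau,\eps)$-$\SH$-chain from $x$ to $y$, as in \cite{BenaimAPT1996}.
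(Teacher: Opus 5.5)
The paper gives no proof of this proposition. It imports it from \cite[Thm.~5.1]{GoebelChainTransitivity2026} as the hybrid analogue of \cite[Thm.~4.3]{BenaimSA2005}. Your architecture is the standard route to that result: shifted interpolated arcs, a limit-arc lemma via hybrid sequential compactness plus a convergence theorem for the flow pieces, then weak invariance and chain transitivity. Your limit-arc lemma and weak-invariance argument are sound up to routine details.

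The chain-transitivity step does not go through as written. The contradiction argument does not produce an arc starting at the \emph{projection} of a piece's initial point, and that claim is false in general. Along a violating sequence, the limit arc starts at $z=\lim_n\phi(k_n,j_n)$, not at the projections $p_n$. Moving its initial point to $p_n$ would require lower semicontinuous dependence of solutions on initial conditions, which set-valued and hybrid dynamics lack. For a counterexample, take $C=\R^2$, $D=\emptyset$, $F=\cch\{v_A,v_B\}$ for smooth fields whose closed orbits $A,B$ cross transversally only at the origin, and $\omega(\phi)=A\cup B$. A late iterate tracking $A$ may sit, within its vanishing error, on the outgoing branch of $B$ just past the crossing. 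It is then its own projection, yet every solution from it that stays in $A\cup B$ must follow $v_B$ and separates from the piece. What the argument does give is an arc in $\omega(\phi)$ starting at some $z_\ell\in\omega(\phi)$ with $|z_\ell-\phi(k_\ell,j_\ell)|<\eps/3$, where $(k_\ell,j_\ell)$ is the $\ell$-th cut. Use these $z_\ell$ as the chain nodes; the mismatch (at most $2\eps/3$) then appears only at arrival points, which is what the definition allows.

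This leads to a second issue. The paper's chain definition demands $\psi_0(0,0)=x$ exactly, so the first link cannot come from the uniform approximation, nor from any projection. Get it from the limit-arc lemma directly:
\begin{itemize}
\item pick $(k_n,j_n)$ with $\phi(k_n,j_n)\to x$;
\item extract a subsequence whose shifted arcs converge graphically to some $\psi_0\in\SH(\{x\})$ with values in $\omega(\phi)$;
\item after a further extraction, the first grid point whose elapsed hybrid time from $(k_n,j_n)$ is at least $\tau$ (hence at most $\tau+1$) converges to $\psi_0(t_0,j_0)$ for some $(t_0,j_0)\in\dom\psi_0$ with $t_0+j_0\geq\tau$;
\item start your tail segment at such a $(k_n,j_n)$ lying beyond the uniform threshold.
\end{itemize}
The terminal node $y$ needs no such care. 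A short final piece should be absorbed into its predecessor, so that durations stay in a compact range such as $[\tau,2\tau+2]$ for the contradiction argument.

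Two smaller points in the lemma. First, the interpolants of $\phi$ are not equi-Lipschitz when $\{\fwhat_k\}$ is unbounded. Only the clean arcs built from $\sum h_{i+1}f_i$, restarted at each post-jump value, are equi-Lipschitz; \eqref{eq:asymptotic_sim_closeness}, differenced over subwindows, makes the two uniformly close on bounded hybrid time. Second, limits of interpolated (non-grid) values lie in $\omega(\phi)$ because $h_{n+1}\fwhat_{n+1}\to 0$. This follows from \eqref{eq:asymptotic_sim_closeness} with $k=n+1$ together with $h_{n+1}f_n\to 0$.
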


This result was developed in \cite[Thm.~5.1]{GoebelChainTransitivity2026} and can be considered a hybrid extension of \cite[Thm.~4.3]{BenaimSA2005}, which was established in the setting of differential inclusions.

\section{Two-Timescale Asymptotic Simulations}
In the following sections, we distinguish between the iterates associated with the slow and fast dynamics of a coupled asymptotic simulation, denoted by the subscripts $s$ and $f$, respectively. For definitions common to both timescales, we use the variable $r \in \{s,f\}$.

\subsection{Two-Timescale System}
Consider a hybrid inclusion of the form
\begin{equation}\label{eq:two_timescale_sys}
    \begin{aligned}
        (x_s,x_f) &\in C \quad \begin{cases}\dx_s \in F_s(x_s,x_f) \\ \eps\dx_f \in F_f(x_s,x_f)\end{cases} \\
        (x_s,x_f) &\in D \quad (\xp_s,\xp_f) \in G(x_s,x_f),
    \end{aligned}
\end{equation}
where $\eps > 0$ is a small parameter, $x_s \in \Rns$ is the ``slow'' state, $x_f \in \Rnf$ is the ``fast'' state, and $x := (x_s,x_f) \in \Rn$ denotes the overall state, with $n_s + n_f = n$. We denote the overall flow map as
\[F^{\eps}(x) := F_s(x_s,x_f)\times \eps^{-1}F_f(x_s,x_f),\]
and refer to \eqref{eq:two_timescale_sys} as a \emph{two-timescale system}, or $\HTTe = (C,F^{\eps},D,G)$, with the data assumed to satisfy Assumption~\ref{ass:basic_conditions}. 
The $\eps$ parameter does not appear in the system that we are ultimately to approximate; rather, we simulate $\HTT = (C,F^{1},D,G)$, with the separation of timescales introduced in the following definition capturing the two-timescale behavior of $\HTTe$ as $\eps \to 0^+$.
In what follows, let $\proj_{r}(X)$, for $r \in \{s,f\}$, denote the projection of a set $X \subset \Rn$ onto $\R^{n_r}$. The following definition generalizes Definition~\ref{def:admissible} to the two-timescale setting.

\begin{definition}\label{def:two_timescale_admissible}
    A sequence of step size vectors $\{H_k\}_{k=1}^\infty$, with
    \[H_k := (h_{s,k}, h_{f,k}) \in \R^2_{>0} \quad \forall k \in \Zgzero,\]
    is said to be \emph{two-timescale admissible} if, for $r\in\{s,f\}$, the step sizes $\{h_{r,k}\}_{k=1}^\infty$ are themselves admissible and
    \begin{equation}\label{eq:timescale_sep}
        \lim_{k \to \infty}\frac{h_{s,k}}{h_{f,k}} = 0.
    \end{equation}
    Given a two-timescale admissible sequence of step sizes, define, for $r \in \{s,f\}$, the quantities 
    \[\tau_{r,k} := \sum_{i=0}^{k-1}h_{r,i+1} \quad \forall k \in \Zgzero\]
    and $m_r(t) := \max\{k \in \Zgzero : \tau_{r,k} \leq t\}$ for all $t \in \Rgzero$.
\end{definition}
The quantity $\tau_{r,k}$ represents the accumulated time on the $r$-timescale at the $k$-th step, while $m_r(t)$ returns the largest index $k$ for which the accumulated time on the $r$-timescale does not exceed $t$. 
The assumption that $\lim_{k \to \infty}h_{s,k}/h_{f,k} = 0$, first formalized in the two-timescale stochastic approximation literature in \cite{BorkarTTSA1997}, and which may be seen as a discrete-time analogue of the $\eps \to 0^+$ singular-perturbation formulation first developed in \cite{Tikhonov1952}, enforces a separation of timescales for the simulated flow dynamics. 
In particular, for sufficiently large $k$, this implies $\tau_{s,k} \ll \tau_{f,k}$, meaning the fast iterates evolve on a much faster timescale than the slow iterates. Consequently, the fast iterates see the slow ones as quasi-static, while the slow iterates see the steady-state behavior of the fast iterates.
For each $(r,n,T) \in \{s,f\}\times\Zgzero\times\R_{>0}$, let $\Ical_{r,n,T}$ denote the index set
\[\Ical_{r,n,T} := \{k \in \Zgzero : n+1 \leq k \leq m_r(\tau_{r,n} + T)\}.\]

The notion of a two-timescale asymptotic simulation of the hybrid inclusion $\HTT$, given next, generalizes Definition~\ref{def:asymptotic_simulation}.
\begin{definition}\label{def:two_timescale_sim}
    $(\Phi,\{H_{k}\}_{k=1}^\infty)$ is a \emph{two-timescale asymptotic simulation} of $\HTT$ if $\Phi$ is a bounded, complete hybrid sequence, $\{H_k\}_{k=1}^\infty$ is a two-timescale admissible sequence of step sizes, and, with the definitions
    \[\Phi(k,j) := \begin{bmatrix}
        \phi_{s}(k,j) \\ \phi_{f}(k,j)
    \end{bmatrix}, \quad f_{k} := \begin{bmatrix}
        f_{s,k} \\ f_{f,k}
    \end{bmatrix},\]
    the following properties hold:
    \begin{enumerate}
        \item If $\Phi$ is complete in the $k$-direction, then there exists a bounded sequence $\{f_{k}\}_{k=0}^\infty$ of vectors in $\Rn$ such that
        \begin{equation}\label{eq:two_timescale_F_containment}
            \limsup_{k \to \infty}\left(\Phi(k,\jk),f_{k}\right) \subset \graph(F^{1}_C),
        \end{equation}
        and        
        with $\fwhat_{r,k+1}$, for $r \in \{s,f\}$, defined as
        \[\fwhat_{r,k+1} := \frac{\phi_{r}(k+1,\jk)-\phi_{r}(k,\jk)}{h_{r,k+1}} \quad \forall k \in \Zgzero,\]
        the following limits hold for each $T > 0$:
        \begin{subequations}\label{eq:two_timescale_closeness}
            \begin{align}
                \lim_{n \to \infty}\sup_{k \in \Ical_{s,n,T}}&\left|\sum_{i=n}^{k-1}h_{s,i+1}(\fwhat_{s,i+1}-\,f_{s,i})\right| = 0 \label{eq:slow_closeness}\\
                \lim_{n \to \infty}\sup_{k \in \Ical_{f,n,T}}&\left|\sum_{i=n}^{k-1}h_{f,i+1}(\fwhat_{f,i+1}-f_{f,i})\right| = 0, \label{eq:fast_closeness}
            \end{align}
        \end{subequations}
        with $m_r(t)$ and $\tau_{r,n}$ as defined in Definition~\ref{def:two_timescale_admissible}.
        \item If $\Phi$ is complete in the $j$-direction then
        \begin{equation}\label{eq:two_timescale_G_containment}
            \begin{aligned}
                \limsup_{j \to \infty}\left(\Phi(\kj,j), \Phi(\kj,j+1)\right) \subset \graph(G_D).
            \end{aligned}
        \end{equation}
    \end{enumerate}
\end{definition}

\begin{remark}\label{remark:TTmodel}
    Consider an explicit simulator implementing a forward Euler approximation of the flows of $\HTT$, i.e.,
    \begin{equation}\label{eq:euler_TTsys}
        \begin{aligned}
            x &\in C \quad \begin{cases}\xp_s - x_s \in \hp_s\Fwhat_s(x,\hp_s) \\ 
            \xp_f - x_f \in \hp_f \Fwhat_f(x,\hp_f)\end{cases} \\
            x &\in D \quad \xp \in G(x).
        \end{aligned}
    \end{equation}
    A \emph{solution} of \eqref{eq:euler_TTsys} is a hybrid sequence $\Phi : \dom\Phi \to \Rns \times \Rnf$ such that, for $r \in \{s,f\}$,
    \begin{itemize}
        \item $(k,j),(k+1,j) \in \dom\Phi$ implies $\Phi(k,j) \in C$ and 
        \[\phi_{r}(k+1,j) - \phi_{r}(k,j) \in h_{r,k+1}\Fwhat_r(\Phi(k,j),h_{r,k+1}).\]
        \item $(k,j),(k,j+1) \in \dom \Phi$ implies $\Phi(k,j) \in D$ and \[\Phi(k,j+1) \in G(\Phi(k,j)).\]
    \end{itemize}
    Following \cite[Eq.~(12)]{GoebelSmooth2025}, assume that there exists a continuous, non-decreasing function $\gamma : \Rgzero \to \Rgzero$ such that, for all $(x,k,r) \in C \times \Zgzero \times \{s,f\}$, it holds that
    \[\Fwhat_r(x,h_{r,k+1}) \subset F_r(x) + h_{r,k+1}\gamma(|x|)\Ball.\]
    Then, it can be verified that when $\{H_k\}_{k=1}^\infty$ is two-timescale admissible and $\Phi$ is bounded and complete, the pair $(\Phi,\{H_k\}_{k=1}^\infty)$ is a two-timescale asymptotic simulation of $\HTT$. In particular, taking $f_{r,k} := \fwhat_{r,k+1}$ for all $k \in \Zgzero$ satisfies \eqref{eq:two_timescale_F_containment} and \eqref{eq:two_timescale_closeness}, with \eqref{eq:two_timescale_G_containment} holding trivially.
\end{remark}

\subsection{Boundary Layer System}
Consider the hybrid inclusion
\begin{equation}\label{eq:bl_sys}
    \begin{aligned}
        x &\in C \quad \dx \in \Fwtild(x) \\
        x &\in D \quad \xp \in G(x)
    \end{aligned}
\end{equation}
where 
\[\Fwtild(x) := \{0\} \times F_f(x_s,x_f).\]
We refer to \eqref{eq:bl_sys} as the \emph{boundary layer system}, or $\HBL= (C,\Fwtild,D,G)$. Under Assumption~\ref{ass:basic_conditions} on $\HTT$, the system $\HBL$ satisfies the same regularity conditions. The following theorem constitutes the first novel contribution of this paper.
\begin{theorem}\label{theorem:HBL_sim}
    Let $(\Phi,\{H_k\}_{k=1}^\infty)$ be a two-timescale asymptotic simulation of $\HTT$. Then, $(\Phi,\{h_{f,k}\}_{k=1}^\infty)$ is an asymptotic simulation of $\HBL$.
\end{theorem}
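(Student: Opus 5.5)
We check Definition~\ref{def:asymptotic_simulation} for $\HBL$ with $h_k := h_{f,k}$, reusing the data of Definition~\ref{def:two_timescale_sim}. Boundedness and completeness of $\Phi$ carry over directly. Admissibility of $\{h_{f,k}\}_{k=1}^\infty$ is part of two-timescale admissibility, and with this choice $\tau_k = \tau_{f,k}$ and $m(t)=m_f(t)$. The jump condition \eqref{eq:asymptotic_sim_G_containment} is exactly \eqref{eq:two_timescale_G_containment}, since $\HBL$ and $\HTT$ share $C$, $D$ and $G$. So only item~1 requires work.

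Suppose $\Phi$ is complete in the $k$-direction, and let $\{f_k\}$ be the bounded sequence from Definition~\ref{def:two_timescale_sim}. We define the candidate $\tilde f_k := (0, f_{f,k})$, which is bounded.

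\emph{Graph containment.} Take an accumulation point $(x,(0,v))$ of $(\Phi(k,\jk),\tilde f_k)$. By boundedness, pass to a further subsequence along which $f_{s,k}\to w$ as well. Then \eqref{eq:two_timescale_F_containment} gives $x\in C$ and $(w,v)\in F_s(x)\times F_f(x)$. Hence $v\in F_f(x)$, so $(0,v)\in\Fwtild(x)$, and the limit point lies in $\graph(\Fwtild_C)$.

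\emph{Closeness.} With $\fwhat_{k+1} = (\Phi(k+1,\jk)-\Phi(k,\jk))/h_{f,k+1}$, the fast component equals $\fwhat_{f,k+1}$. The fast part of the sum in \eqref{eq:asymptotic_sim_closeness} is therefore exactly the expression in \eqref{eq:fast_closeness}, taken over $\Ical_{f,n,T}$, and it vanishes in the limit. The slow component is $(\phi_s(k+1,\jk)-\phi_s(k,\jk))/h_{f,k+1} = (h_{s,k+1}/h_{f,k+1})\fwhat_{s,k+1}$. Its sum telescopes:
\[
\sum_{i=n}^{k-1} h_{f,i+1}\cdot\tfrac{h_{s,i+1}}{h_{f,i+1}}\fwhat_{s,i+1}=\sum_{i=n}^{k-1}h_{s,i+1}\fwhat_{s,i+1}.
\]
We need this to go to zero uniformly over $k\in\Ical_{f,n,T}$. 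This is the main step.

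Write
\[
\sum_{i=n}^{k-1} h_{s,i+1}\fwhat_{s,i+1} = \sum_{i=n}^{k-1} h_{s,i+1}(\fwhat_{s,i+1}-f_{s,i}) + \sum_{i=n}^{k-1} h_{s,i+1} f_{s,i}.
\]
For the second sum, let $M$ bound $|f_{s,i}|$ and $\delta_n := \sup_{i\ge n} h_{s,i+1}/h_{f,i+1}$, so that $\delta_n\to0$ by \eqref{eq:timescale_sep}. Then
\[
\Bigl|\sum_{i=n}^{k-1} h_{s,i+1} f_{s,i}\Bigr| \le M\delta_n\sum_{i=n}^{k-1}h_{f,i+1}\le M\delta_n T
\]
for $k\in\Ical_{f,n,T}$.

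For the first sum, the same estimate gives $\tau_{s,k}-\tau_{s,n}\le\delta_n T\le 1$ for large $n$. Hence $\Ical_{f,n,T}\subset\Ical_{s,n,1}$ eventually. The first sum is then bounded by the supremum in \eqref{eq:slow_closeness} with $T=1$, which tends to zero.

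Combining the two components yields \eqref{eq:asymptotic_sim_closeness}. The only delicate point is this index-set inclusion $\Ical_{f,n,T}\subset\Ical_{s,n,1}$, which is needed to invoke \eqref{eq:slow_closeness} uniformly. It follows directly from $h_{s,k}/h_{f,k}\to0$, which makes fast-time windows of fixed length short on the slow timescale.
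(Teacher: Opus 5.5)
Your proof is correct. Its skeleton matches the paper's: the fast component comes straight from \eqref{eq:fast_closeness}, the jump condition is inherited verbatim, and the slow component is handled by nesting fast-timescale windows inside slow-timescale ones. The difference is your choice of surrogate slow velocity, which changes where the condition $h_{s,k}/h_{f,k}\to 0$ is used.

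The paper takes $\widetilde{f}_{s,k} := (h_{s,k+1}/h_{f,k+1})f_{s,k}$, so that term by term $h_{f,i+1}\bigl((\phi_s(i+1,\overline{\jmath}_i)-\phi_s(i,\overline{\jmath}_i))/h_{f,i+1}-\widetilde{f}_{s,i}\bigr) = h_{s,i+1}(\widehat{f}_{s,i+1}-f_{s,i})$. The slow closeness over fast windows is then literally the hypothesis \eqref{eq:slow_closeness}. It is transferred by the inclusion $\mathcal{I}_{f,n,T}\subset\mathcal{I}_{s,n,T}$ of Lemma~\ref{lemma:supremums}, which needs only $h_{s,k}\le h_{f,k}$ eventually. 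The vanishing ratio is instead used in the graph containment, to force $\widetilde{f}_{s,k_n}\to 0$ along convergent subsequences.

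You take $\widetilde{f}_{s,k}=0$. This makes the graph containment a simple projection after extracting a further subsequence. It leaves the drift term $\sum_{i=n}^{k-1}h_{s,i+1}f_{s,i}$ in the closeness step, which your bound $M\delta_n T$ removes. The same $\delta_n$ also gives the stronger inclusion $\mathcal{I}_{f,n,T}\subset\mathcal{I}_{s,n,\eta}$ for any fixed $\eta>0$.

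The paper's route is more economical, since closeness becomes a pure rewriting of a hypothesis. Yours makes the quasi-static content of the theorem explicit: the slow flow increments accumulated over any fast window of fixed length vanish. Your surrogate velocities also have the boundary-layer form $\{0\}\times\mathbb{R}^{n_f}$ at every $k$, not only in the limit.
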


Let $(\Phi,\{H_k\}_{k=1}^\infty)$ be a two-timescale asymptotic simulation of $\HTT$; in particular, $\Phi$ is bounded and thus $\omega(\Phi)$ is nonempty and compact. Let $\Lambda : \Rns \tto \Rnf$ be a set-valued mapping whose graph is compact and contains $\omega(\Phi)$.

\begin{remark} If $\Phi$ is complete in the $k$-direction, then
    \begin{equation}\label{eq:limsup_k_A}
        \limsup_{k \to \infty}(\phi_{s}(k,\jk),\phi_{f}(k,\jk)) \subset \omega(\Phi) \subset \graph(\Lambda).
    \end{equation}
    Similarly, if $\Phi$ is complete in the $j$-direction, then
    \begin{equation}\label{eq:limsup_j_A}
        \limsup_{j \to \infty}(\phi_{s}(\kj,j),\phi_{f}(\kj,j)) \subset \omega(\Phi) \subset \graph(\Lambda).
    \end{equation}
\end{remark}
Since $\graph(\Lambda)$ is compact (and hence closed), outer-semicontinuity of $\Lambda$ follows from \cite[Thm.~5.7(a)]{VariationalAnalysis}. Moreover, the compactness (and hence boundedness) of $\range(\Lambda)$ implies local boundedness of $\Lambda$ by \cite[Prop.~5.15]{VariationalAnalysis}. Therefore, the mapping $\Lambda$ is outer-semicontinuous and locally bounded. 

Let $\HBLK$ be the hybrid inclusion $\HBL$ restricted to $K\subset\Rn$, i.e., $\HBLK = (C\cap K,\Fwtild,D^K,G^K)$, where $D^K := \{x \in D\cap K : G(x)\cap K \neq \mtset\}$ and $G^K(x) := G(x)\cap K$ for every $x \in \Rn$. The system $\HBLK$ satisfies Assumption~\ref{ass:basic_conditions} if $K$ is compact. We now adapt \cite[Def.~6.23]{HDS_Book}.

\begin{definition}
    Define, for a set $X \subset \Rn$, $\OmegaBLK(X)$ as the set of all $z \in \Rn$ for which there exists a sequence $\{\psi_i\}_{i=1}^\infty$ of solutions $\psi_i \in \Scal_{\HBLK}(X)$ and a sequence $\{(t_i,j_i)\}_{i=1}^\infty$ of points $(t_i,j_i) \in \dom\psi_i$ such that $\lim_{i\to\infty}t_i+j_i = \infty$ and $\lim_{i \to \infty}\psi_i(t_i,j_i) = z$.
\end{definition}

In settings where $\omega(\Phi)$ cannot be determined solely from the behavior of $\HTT$, the following lemma allows for the construction of
$\Lambda$ using additional information about $\HBL$. 

\begin{lemma}\label{lemma:HBL_omegalim}
    Let $(\Phi,\{h_{f,k}\}_{k=1}^\infty)$ be an asymptotic simulation of $\HBL$ and $K$ be a compact set for which $\Phi(k,j) \in K$ for all $(k,j) \in \dom\Phi$. Then, $\omega(\Phi) \subset \OmegaBLK(K)$.
\end{lemma}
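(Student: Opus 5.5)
The approach is to use Proposition~\ref{prop:aymptotic_sim_omega} applied to a restricted system. We first show that $(\Phi,\{h_{f,k}\}_{k=1}^\infty)$ is also an asymptotic simulation of $\HBLK$. The key step after that is that a weakly invariant set of $\HBLK$ is contained in its own $\Omega$-limit. Since $\Phi(k,j)\in K$ for all $(k,j)$ and $K$ is compact (hence closed), we have $\omega(\Phi)\subset K$.

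\emph{Step 1.} We check the items of Definition~\ref{def:asymptotic_simulation} for $\HBLK$, given that they hold for $\HBL$. The step sizes, boundedness, completeness and the closeness limit \eqref{eq:asymptotic_sim_closeness} are unchanged. For the flow containment \eqref{eq:asymptotic_sim_F_containment}, any accumulation point $(x,v)$ of $(\Phi(k,\jk),f_k)$ lies in $\graph(\Fwtild_C)$, and $x\in K$ because $K$ is closed. Hence $(x,v)\in\graph(\Fwtild_{C\cap K})$. For the jump containment \eqref{eq:asymptotic_sim_G_containment}, any accumulation point $(x,y)$ of $(\Phi(\kj,j),\Phi(\kj,j+1))$ satisfies $x\in D$, $y\in G(x)$, and $x,y\in K$. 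Thus $y\in G^K(x)$, so $G^K(x)\neq\mtset$ and $x\in D^K$. Since $\HBLK$ satisfies Assumption~\ref{ass:basic_conditions} for compact $K$, Proposition~\ref{prop:aymptotic_sim_omega} applies. It yields that $\omega(\Phi)$ is nonempty, compact and weakly $\Scal_{\HBLK}$-invariant.

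\emph{Step 2.} Fix $z\in\omega(\Phi)$. For each $i\in\N$, weak invariance with $T=i$ gives a complete $\psi_i\in\Scal_{\HBLK}$ and $(s_i,l_i)\in\dom\psi_i$ with $s_i+l_i\ge i$ and $\psi_i(s_i,l_i)=z$. Moreover $\psi_i$ stays in $\omega(\Phi)\subset K$, so $\psi_i(0,0)\in K$ and $\psi_i\in\Scal_{\HBLK}(K)$. Taking $(t_i,j_i):=(s_i,l_i)$ gives $t_i+j_i\to\infty$ and $\psi_i(t_i,j_i)=z$ for every $i$. Hence $z\in\OmegaBLK(K)$, and so $\omega(\Phi)\subset\OmegaBLK(K)$.

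The main obstacle is Step 1, specifically the jump part. We must verify that the restricted jump set $D^K$ and jump map $G^K$ still capture all limiting jump pairs. This relies on both $\Phi(\kj,j)$ and $\Phi(\kj,j+1)$ lying in the closed set $K$. Step 2 is then a direct unpacking of the definitions.
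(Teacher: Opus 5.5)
Your proof is correct. Step 1 is the same reduction the paper makes, and you supply details the paper omits, notably that a limiting jump pair $(x,y)$ has $y\in G(x)\cap K$, which forces $x\in D^K$.

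Step 2 is where you diverge. The paper invokes \cite[Prop.~6.28]{HDS_Book} to identify $\OmegaBLK(K)$ as the minimal global attractor of $\HBLK$. It then asserts, without argument, that every weakly invariant, internally chain transitive set must therefore lie inside it. You instead get the inclusion directly from the definitions: the solutions supplied by weak $\Scal_{\HBLK}$-invariance of $\omega(\Phi)$ serve as the sequence $\psi_i$ in the definition of $\OmegaBLK(K)$. Incidentally, $\psi_i(0,0)\in K$ holds automatically for any $\Scal_{\HBLK}$-arc.

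Your route is self-contained and uses only weak invariance; internal chain transitivity is never needed. It also avoids whatever hypotheses the cited attractor characterization requires. The paper's step from ``attractor'' to ``contains all weakly invariant sets,'' if written out, reduces to essentially your argument. What the paper's framing adds is conceptual: it identifies $\OmegaBLK(K)$ as an attractor of the boundary layer dynamics, which is why it is a natural choice for $\graph(\Lambda)$. That extra information is not needed for the inclusion itself.
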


\begin{remark}    
    It should be emphasized that $\HBL$ is not a subset of $\HTT$, and solutions of $\HBL$ need not correspond to solutions of $\HTT$. Rather, the connection arises via the fact that a two-timescale asymptotic simulation of $\HTT$ is also an asymptotic simulation of $\HBL$. Consequently, $\omega(\Phi)$ is additionally constrained by weakly $\Scal_{\HBL}$-invariant and internally $\Scal_{\HBL}$-chain transitive sets, allowing $\HBL$ to be used to characterize $\omega(\Phi)$, and hence $\graph(\Lambda)$.
\end{remark}

\subsection{Reduced System}
In the context of defining the data of the reduced system, let the set-valued mappings $\LRC,\LRD :\Rns \tto \Rnf$ be defined as 
$\graph(\LRC) := \graph(\Lambda)\cap C$ and $\graph(\LRD) := \graph(\Lambda)\cap D$.
Both of these mappings retain the outer-semicontinuity and local boundedness of $\Lambda$ by virtue of their graphs being compact. We claim that the limiting behavior of the slow iterates is characterized by the hybrid inclusion
\begin{equation}\label{eq:reduced_sys}
    \begin{aligned}
        x_s &\in \: \CR, \quad \: \dx_s \in \FR(x_s) \\
        x_s &\in \DR, \quad \xp_s \in \GR(x_s),
    \end{aligned}
\end{equation}
where $\CR := \dom(\LRC)$, $\DR := \dom(\LRD)$, and the set-valued mappings $\FR,\GR : \Rns \tto \Rns$ are defined as
\[\begin{aligned}
    \FR(x_s) :\!\!&= \cch F_s\!\left(x_s,\LRC(x_s)\right) \\
     &= \cch\left\{f_s \in \Rns : f_s \in F_s(x_s,x_f), \; x_f \in \LRC(x_s)\right\}
\end{aligned}\]
and, with $G_s(x_s,x_f) := \proj_{s}\bigl(G(x_s,x_f)\bigr)$,
\[\begin{aligned}
        \GR(x_s) :\!\!&= G_s(x_s,\LRD(x_s)) \\
        &= \left\{g_s \in \Rns : g_s \in G_s(x_s,x_f), \; x_f \in \LRD(x_s)\right\}.
\end{aligned}\]
We refer to \eqref{eq:reduced_sys} as $\HRL$, or the \emph{reduced system} associated with $\HTT$ and $\Lambda$, with data $\HRL = (\CR,\FR,\DR,\GR)$.
The set-valued mapping $\LRC$, i.e., the restriction of $\Lambda$ to $C$, can be written as $\LRC(x_s) = \{x_f \in \Lambda(x_s) : (x_s,x_f)\in C\}$ for each $x_s \in \CR$, meaning only the flow-admissible values of $x_f = \Lambda(x_s)$ contribute to the behavior of $\FR$. An analogous restriction applies for $\LRD$, $\DR$, and $\GR$.

\begin{lemma}\label{lemma:reduced_regularity}
    Let $(\Phi,\{H_k\}_{k=1}^\infty)$ be a two-timescale asymptotic simulation of $\HTT = (C,F^{1},D,G)$ satisfying Assumption~\ref{ass:basic_conditions}, with $\graph(\Lambda)$ compact and such that $\omega(\Phi) \subset \graph(\Lambda)$. Then, $\HRL = (\CR,\FR,\DR,\GR)$ satisfies Assumption~\ref{ass:basic_conditions}.
\end{lemma}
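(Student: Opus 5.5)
We verify the four items of Assumption~\ref{ass:basic_conditions} for $\HRL$ one at a time, relying on the compactness of $\graph(\Lambda)$ together with the basic conditions on $\HTT$.

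\emph{Closedness of $\CR$ and $\DR$.} Since $C$ and $D$ are closed and $\graph(\Lambda)$ is compact, the sets $\graph(\LRC)=\graph(\Lambda)\cap C$ and $\graph(\LRD)=\graph(\Lambda)\cap D$ are compact. The sets $\CR=\dom(\LRC)=\proj_s(\graph(\LRC))$ and $\DR=\proj_s(\graph(\LRD))$ are continuous images of compact sets, hence compact and in particular closed.

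\emph{Outer semicontinuity and local boundedness of $\GR$.} Since $\graph(\LRD)$ is compact and $G$ is locally bounded, the set $G(\graph(\LRD))$ is bounded, and hence so is $\range(\GR)$, which gives local boundedness. For outer semicontinuity, take $x_s^i\to x_s$ and $g_s^i\to g_s$ with $g_s^i\in\GR(x_s^i)$. Then there are $x_f^i\in\LRD(x_s^i)$ and $g_f^i$ with $(g_s^i,g_f^i)\in G(x_s^i,x_f^i)$. By compactness of $\graph(\LRD)$ we may pass to a subsequence with $x_f^i\to x_f\in\LRD(x_s)$, and by local boundedness of $G$ a further subsequence gives $g_f^i\to g_f$. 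Outer semicontinuity of $G$ then yields $(g_s,g_f)\in G(x_s,x_f)$, so $g_s\in\GR(x_s)$. For nonemptiness, if $x_s\in\DR$ there is some $x_f$ with $(x_s,x_f)\in D$, and $G(x_s,x_f)\neq\mtset$ gives a nonempty projection.

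\emph{Properties of $\FR$.} The same argument applied to $F_s$ shows that the auxiliary map $\widetilde F(x_s):=F_s(x_s,\LRC(x_s))$ is outer semicontinuous with bounded range, say contained in $M\Ball$, and that it is nonempty on $\CR$. Then $\FR=\cch\widetilde F$ has nonempty, convex, closed values on $\CR$ and range in $M\Ball$, so it is locally bounded.

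The main obstacle is outer semicontinuity of $\FR$, because closed convex hulls do not automatically preserve it. We handle this with Carathéodory's theorem, which is valid here because all values are bounded. Every element of $\operatorname{co}\widetilde F(x_s)$ is a convex combination of $n_s+1$ points of $\widetilde F(x_s)$. Since $\widetilde F(x_s)$ is compact (it is closed by outer semicontinuity and bounded), its convex hull is already closed, so $\FR(x_s)=\operatorname{co}\widetilde F(x_s)$. Now take $x_s^i\to x_s$ and $v^i\to v$ with $v^i\in\FR(x_s^i)$, and write $v^i=\sum_{\ell}\lambda_\ell^i w_\ell^i$ with $w_\ell^i\in\widetilde F(x_s^i)$. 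Because the weights lie in a simplex and the $w_\ell^i$ lie in $M\Ball$, we may extract convergent subsequences $\lambda_\ell^i\to\lambda_\ell$ and $w_\ell^i\to w_\ell$. Outer semicontinuity of $\widetilde F$ gives $w_\ell\in\widetilde F(x_s)$, hence $v=\sum_\ell\lambda_\ell w_\ell\in\FR(x_s)$.

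Note that the hypothesis that $\Phi$ is a two-timescale asymptotic simulation plays no essential role in this argument. Only the compactness of $\graph(\Lambda)$ is used.
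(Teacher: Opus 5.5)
Your proof is correct and has the same skeleton as the paper's. Closedness of $\CR,\DR$ comes from projecting the compact sets $\graph(\Lambda)\cap C$ and $\graph(\Lambda)\cap D$, and nonemptiness from the basic conditions on $C$ and $D$. You then get outer semicontinuity and local boundedness of $\GR$ and of $\widetilde F(x_s)=F_s(x_s,\LRC(x_s))$, and finally pass to the closed convex hull.

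The difference is in how the last two steps are justified. The paper handles the composition step by citing \cite[Prop.~5.52(a),(b)]{VariationalAnalysis} and the convex-hull step by citing \cite[Cor.~2.30]{VariationalAnalysis}. You prove both directly with sequential compactness and Carath\'eodory's theorem.

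The paper's route is shorter. Yours is self-contained and exposes the one delicate point: $x_s\mapsto\cch\widetilde F(x_s)$ need not be outer semicontinuous without a bound on the range. It is exactly the inclusion $\range(\widetilde F)\subset M\Ball$, inherited from compactness of $\graph(\Lambda)$ and local boundedness of $F^{1}$, that lets you extract convergent weights and points. The paper's citation compresses this.

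Two small points of precision:
\begin{itemize}
\item Carath\'eodory's theorem holds for arbitrary subsets of $\Rns$. Boundedness is what you need for closedness of $\operatorname{co}\widetilde F(x_s)$ and for the extraction, not for Carath\'eodory itself.
\item Your ``same argument applied to $F_s$'' should really be run on $F^{1}$ and then projected, as you did with $G_s=\proj_s\circ G$. Only $F^{1}=F_s\times F_f$ is assumed outer semicontinuous and locally bounded. This works because $F_f$ is nonempty on $C\supset\graph(\LRC)$, so every $w\in F_s(x)$ with $x\in\graph(\LRC)$ extends to a point of $F^{1}(x)$.
\end{itemize}

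Your remark that the asymptotic-simulation hypothesis plays no role matches the paper's proof, which likewise uses only compactness of $\graph(\Lambda)$ and the basic conditions.
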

The following result concerning the behavior of the slow iterates represents the second novel contribution of this paper.

\begin{theorem}\label{theorem:HRL_sim}
    Assume the conditions of Lemma~\ref{lemma:reduced_regularity}. Then, the pair $(\phi_{s},\{h_{s,k}\}_{k=1}^\infty)$ is an asymptotic simulation of $\HRL$.
\end{theorem}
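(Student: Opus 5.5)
The plan is to check the requirements of Definition~\ref{def:asymptotic_simulation} for $(\phi_s,\{h_{s,k}\}_{k=1}^\infty)$ one at a time. The structural requirements come straight from the hypotheses. The hybrid sequence $\phi_s$ has the same domain as $\Phi$, so it is complete, and it is complete in the $k$-direction (resp.\ $j$-direction) exactly when $\Phi$ is. It is bounded because $\Phi$ is. By Definition~\ref{def:two_timescale_admissible}, $\{h_{s,k}\}_{k=1}^\infty$ is admissible. Also, the quantities $\jk$ and $\kj$ computed from $\phi_s$ and from $\Phi$ coincide. By Lemma~\ref{lemma:reduced_regularity}, $\HRL$ satisfies Assumption~\ref{ass:basic_conditions}, so it is a legitimate system to simulate. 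The only real work is to verify the flow containment \eqref{eq:asymptotic_sim_F_containment} and the jump containment \eqref{eq:asymptotic_sim_G_containment} for the reduced data.

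\textbf{Flows.} Suppose $\Phi$ is complete in the $k$-direction. I take the bounded sequence $\{f_k\}$ given by Definition~\ref{def:two_timescale_sim} and keep only its slow components $f_{s,k}$, which are bounded. With this choice, $\fwhat_{s,k+1}$ as defined in Definition~\ref{def:two_timescale_sim} is exactly the difference quotient required in Definition~\ref{def:asymptotic_simulation} with step sizes $h_{s,k}$. Hence the closeness condition \eqref{eq:asymptotic_sim_closeness} is literally \eqref{eq:slow_closeness}.

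For the containment, let $(x_s,v)$ be an accumulation point of $(\phi_s(k,\jk),f_{s,k})$ along some subsequence $k_i$. The full sequence $(\Phi(k,\jk),f_k)$ is bounded, so I pass to a further subsequence along which it converges to some $(x_s,x_f,v,w)$. By \eqref{eq:two_timescale_F_containment}, this limit lies in $\graph(F^1_C)$, so $x:=(x_s,x_f)\in C$ and $v\in F_s(x_s,x_f)$. By \eqref{eq:limsup_k_A}, $x\in\omega(\Phi)\subset\graph(\Lambda)$. Therefore $x\in\graph(\Lambda)\cap C=\graph(\LRC)$, that is, $x_f\in\LRC(x_s)$. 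This gives $x_s\in\CR$ and
\[
v\in F_s\bigl(x_s,\LRC(x_s)\bigr)\subset\FR(x_s),
\]
so $(x_s,v)\in\graph(\FR)\cap(\CR\times\Rns)$, which is the required flow containment for $\HRL$.

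\textbf{Jumps.} Suppose $\Phi$ is complete in the $j$-direction, and let $(x_s,y_s)$ be an accumulation point of $(\phi_s(\kj,j),\phi_s(\kj,j+1))$. Boundedness of $\Phi$ lets me extract a subsequence along which the full pair $(\Phi(\kj,j),\Phi(\kj,j+1))$ converges to some $(x,y)$ with $x=(x_s,x_f)$ and $\proj_s$ of $y$ equal to $y_s$. By \eqref{eq:two_timescale_G_containment}, $x\in D$ and $y\in G(x)$. By \eqref{eq:limsup_j_A}, $x\in\graph(\Lambda)$, so $x_f\in\LRD(x_s)$ and $x_s\in\DR$. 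Moreover $y_s\in G_s(x_s,x_f)\subset\GR(x_s)$, which is the required jump containment.

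I expect no serious obstacle in this argument. The one point that needs care is that outer limits of the projected sequences must be lifted to outer limits of the full sequences before \eqref{eq:two_timescale_F_containment}, \eqref{eq:two_timescale_G_containment}, \eqref{eq:limsup_k_A} and \eqref{eq:limsup_j_A} can be used. This lifting works because the fast components and the $f_k$ are bounded, so compactness provides the further subsequences. It is also where the hypothesis $\omega(\Phi)\subset\graph(\Lambda)$ enters: it guarantees that every limiting fast state lies in $\LRC(x_s)$ or $\LRD(x_s)$.
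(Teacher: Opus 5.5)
Your proposal is correct and follows essentially the same route as the paper: you keep the slow components $f_{s,k}$, so that the closeness condition is literally \eqref{eq:slow_closeness}, then lift accumulation points of the projected sequences to the full bounded sequences and combine \eqref{eq:two_timescale_F_containment}, \eqref{eq:two_timescale_G_containment} with \eqref{eq:limsup_k_A}, \eqref{eq:limsup_j_A} to land in $\graph(\LRC)$ or $\graph(\LRD)$. Your jump case is written out more explicitly than the paper's, which only says that a similar argument applies.
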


The following corollary provides a further refinement of the limit set of the overall two-timescale asymptotic simulation.

\begin{corollary}\label{corr:final_limit}
    Let the conditions of Theorem~\ref{theorem:HRL_sim} hold. Then,
    \begin{equation*}
        \omega(\Phi) \subset \graph(\Lambda) \cap \left(\Kcal \times \Rnf\right),
    \end{equation*}
    where $\Kcal$ is a compact set containing all internally $\Scal_{\HRL}$-chain transitive and weakly $\Scal_{\HRL}$-invariant sets. Moreover, the set $\omega(\phi_{s})$ is itself nonempty, compact, $\Scal_{\HRL}$-chain transitive, and weakly $\Scal_{\HRL}$-invariant.
\end{corollary}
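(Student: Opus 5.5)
The corollary follows from Theorem~\ref{theorem:HRL_sim} combined with Proposition~\ref{prop:aymptotic_sim_omega}, applied to $\HRL$; Lemma~\ref{lemma:reduced_regularity} supplies the required regularity. First, Theorem~\ref{theorem:HRL_sim} shows that $(\phi_s,\{h_{s,k}\}_{k=1}^\infty)$ is an asymptotic simulation of $\HRL$, and Lemma~\ref{lemma:reduced_regularity} shows that $\HRL$ satisfies Assumption~\ref{ass:basic_conditions}. Proposition~\ref{prop:aymptotic_sim_omega} then gives that $\omega(\phi_s)$ is nonempty, compact, weakly $\Scal_{\HRL}$-invariant and internally $\Scal_{\HRL}$-chain transitive. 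This is the second assertion.

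For the first assertion, we identify $\omega(\phi_s)$ with $\proj_s(\omega(\Phi))$. Take $z=(z_s,z_f)\in\omega(\Phi)$. By definition there are $(k_i,j_i)\in\dom\Phi$ with $k_i+j_i\to\infty$ and $\Phi(k_i,j_i)\to z$, so $\phi_s(k_i,j_i)\to z_s$. Here $\phi_s$ is the hybrid sequence $\Phi$ projected onto the slow coordinates, with the same domain, so $z_s\in\omega(\phi_s)$. The converse inclusion holds because $\Phi$ is bounded: any sequence realizing a point of $\omega(\phi_s)$ has a subsequence along which $\phi_f$ also converges, and this is all we need. We conclude $\omega(\Phi)\subset\omega(\phi_s)\times\Rnf$.

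Since $\omega(\phi_s)$ is itself compact, weakly invariant and internally chain transitive, it lies in any $\Kcal$ containing all such sets. Hence $\omega(\Phi)\subset\Kcal\times\Rnf$. Intersecting with the standing hypothesis $\omega(\Phi)\subset\graph(\Lambda)$ yields the first claim. The set $\Kcal$ can be taken to be the closure of the union of all such sets; this union is bounded because each of these sets lies in $\CR\cup\DR\subset\proj_s(\graph(\Lambda))$, which is compact. The containment in $\CR\cup\DR$ holds because weak invariance requires complete solutions through each point, so $\Kcal$ is compact.

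The only delicate point is this bookkeeping: checking that the slow projection of the hybrid sequence has the same domain, so that its $\omega$-limit is the projection, and that $\Kcal$ can be chosen compact. I expect no real obstacle beyond these routine checks.
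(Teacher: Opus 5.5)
Your proposal is correct and follows the paper's proof. Theorem~\ref{theorem:HRL_sim} together with Proposition~\ref{prop:aymptotic_sim_omega} gives the stated properties of $\omega(\phi_s)$, and then $\proj_s(\omega(\Phi)) \subset \omega(\phi_s) \subset \Kcal$ combined with $\omega(\Phi)\subset\graph(\Lambda)$ yields the inclusion. You also justify two points the paper only asserts: the projection identity for $\omega$-limit sets, and the existence of a compact $\Kcal$ via $\CR\cup\DR\subset\proj_s(\graph(\Lambda))$.
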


\section{Example - Stochastic Hybrid Optimization}\label{sec:example}
We now demonstrate the utility of the preceding results by establishing the convergence of a two-timescale stochastic approximation of a deterministic hybrid optimization algorithm, making references to \cite[Sec.~6]{GoebelChainTransitivity2026} as needed. Given that we work with a probability space $\Pspace$, we adopt the notation $\olim(\phi) := \omega(\phi)$ to differentiate between $\omega$-limit sets and events $\omega \in \Omega$. The following example is merely intended to illustrate a potential application; it should not be expected to serve as a comprehensive presentation of two-timescale stochastic approximations of hybrid systems.

\subsection{Hybrid Heavy Ball Algorithm}

Consider the optimization problem $\min_{q \in \Rn}\Psi(q)$, where $\Psi : \Rn \to \R$ satisfies the following assumptions.

\begin{assumption}\label{ass:Psi_prop}
    The function $\Psi$ is continuously differentiable, has compact sub-level sets, has a globally Lipschitz gradient $\nabla \Psi$, and, with $\Psi^* := \min_{q \in \Rn}\Psi(q)$, satisfies
    \[\Qcal^* := \{q \in \Rn : \Psi(q) = \Psi^*\} = \{q \in \Rn : \nabla \Psi(q) = 0\}.\]
    Additionally, $\Psi$ is a finite sum, where $N \in \Z_{\geq 1}$ and $\Psi(q) := \frac{1}{N}\sum_{i=1}^{N}\Psi_{i}(q)$,
    i.e., the canonical setting of stochastic gradient descent.
\end{assumption}

Consider a modified version of the \emph{Hybrid Heavy Ball} (HHB) algorithm of \cite{LeHHB2021}, with state $\chi := (q,p,\tau) \in \R^{2n+1}$,
\begin{equation*}
    \begin{alignedat}{2}
        \chi &\in \CH \quad &&\dchi = \fH(\chi,\nabla\Psi(q)) := \begin{bmatrix}p \\ -\kappa p - \nabla \Psi(q) \\ \min\{1,2-\tfrac{\tau}{T}\}\end{bmatrix}\\
        \chi &\in \DH \quad &&\chip = \gH(\chi) := (q,0,0),
    \end{alignedat}
\end{equation*}
parameters $T > 0$, $\kappa > 0$, and
\[\begin{aligned}
        \CH &:= \left\{\chi : \inner{\nabla\Psi(q),p} \leq 0, \tau \geq T\right\}\cup\{\chi : \tau \leq T\}, \\
        \DH &:= \left\{\chi : \inner{\nabla\Psi(q),p} \geq 0, \tau \geq T\right\}.
\end{aligned}\]
Let $\HHB = (\CH,\fH,\DH,\gH)$ denote this system. The $\tau$ automaton, parametrized by $T > 0$, prevents purely discrete solutions to $\HHB$ by requiring that the system flows when $\tau \in [0,T]$, and is additionally such that the set $[0,2T]$ is Globally Asymptotically Stable (GAS) for $\tau$. The analysis of \cite[Sec.~III]{LeHHB2021} provides a means of selecting an optimal $\kappa$.
By a simple adaptation of the analysis done in \cite{LeHHB2021} for the unmodified HHB system, the compact set $\Mcal := \Qcal^* \times \{0\} \times [0,2T]$ can be shown to be GAS for $\HHB$, meaning all weakly $\Scal_{\HHB}$-invariant and internally $\Scal_{\HHB}$-chain transitive sets are contained in $\Mcal$. Therefore, motivated by Proposition~\ref{prop:aymptotic_sim_omega}, a stochastic approximation scheme that behaves as an asymptotic simulation of $\HHB$ is desirable.

\subsection{Two-Timescale Stochastic Approximation}
An approximation of the aforementioned $\HHB$ system provides a natural setting in which the two-timescale framework yields a concrete benefit over existing single-timescale results. In $\HHB$, the flow set $\CH$ and jump set $\DH$ both depend directly on $\nabla \Psi(q)$, meaning full knowledge of the gradient $\nabla\Psi(q)$ is required to evaluate set membership. A two-timescale approach circumvents this issue by introducing a fast variable that tracks $\nabla \Psi(q)$, and redefines the flow and jump sets to use this fast-timescale estimate, allowing for the slow iterates to asymptotically recover the behavior of $\HHB$.

Let $\Pspace$ be a probability space and $\yp$ be a placeholder for an i.i.d. sequence of random variables $\{\yb_k\}_{k=1}^\infty$, where $\yb_{k} : \Omega \to \R$ satisfies $\yb_k \sim U\{1,\ldots,N\}$ for each $k \in \Zgzero$. Take $x_s := \chi = (q,p,\tau)\in\R^{2n+1}$ and introduce a fast variable $x_f := \xi \in \Rn$. 
Consider the following two-timescale simulator, with state $x := (\chi,\xi) \in \R^{3n+1}$,
\begin{equation}\label{eq:TTSA_simulator}
    \begin{alignedat}{2}
        (\chi,\xi) &\in C \quad &&\begin{cases}\chip- \chi=  \hp_s \fH(\chi,\xi) \\ 
        \xip  - \xi = \hp_f(\nabla \Psi_{\yp}(q) - \xi) \end{cases} \\
        (\chi,\xi) &\in D \quad &&(\chip,\xip) = (\gH(\chi), \xi).
    \end{alignedat}
\end{equation}
The parameters $\kappa,T > 0$ are chosen as in $\HHB$, with
\[\begin{aligned}
        C &:= \{(\chi,\xi)  : \inner{\xi,p} \leq 0, \tau \geq T\}\cup\{(\chi,\xi) : \tau \leq T\}, \\
        D &:= \{(\chi,\xi) : \inner{\xi,p} \geq 0, \tau \geq T\}.
\end{aligned}\]
Let $\{\Fcal_{k}\}_{k=0}^\infty$ be the minimal filtration of $\{\yb_{k}\}_{k=1}^\infty$ on $\Fcal$, with $\Fcal_0 := \{\mtset,\Omega\}$. With this, $\Exc{\nabla \Psi_{\yb_{k+1}}(q)}{\Fcal_k} = \nabla\Psi(q)$ for all $k \in \Zgzero$. In the context of discussing solutions to \eqref{eq:TTSA_simulator}, let $\omega \mapsto \Xb(\omega)$ be such that, for every $\omega \in \Omega$, $\Xb(\omega)$ is a hybrid sequence, also called a \emph{sample path}. With $\xb_{r} := \proj_{r}(\Xb)$, the collection of sample paths $\Xb = (\xb_{s},\xb_{f})$ is a \emph{solution} to \eqref{eq:TTSA_simulator} if it is suitably \emph{adapted} (see below) and, for every $\omega \in \Omega$, $\Xb(\omega)$ is a complete hybrid sequence taking values in $\R^{3n+1}$ and satisfying the constraints of \eqref{eq:TTSA_simulator}. The latter condition amounts to requiring that, if $(k,j),(k+1,j) \in \dom\Xb(\omega)$, then $\Xb(k,j) \in C$ and
\begin{equation*}
    \begin{aligned}
        \frac{\xb_{s}(k+1,j)-\xb_{s}(k,j)}{h_{s,k+1}} &= \fH(\xb_{s}(k,j),\xb_{f}(k,j)), \\
        \frac{\xb_{f}(k+1,j)-\xb_{f}(k,j)}{h_{f,k+1}} &= \nabla \Psi_{\yb_{k+1}}(\qb(k,j)) - \xb_{f}(k,j),
    \end{aligned}
\end{equation*}
and, if $(k,j),(k,j+1)\in\dom\Xb(\omega)$, then $\Xb(k,j) \in D$ and
\begin{equation*}
    \begin{aligned}
        \xb_{s}(k,j+1) = \gH(\xb_{s}(k,j)), \quad \xb_{f}(k,j+1) = \xb_{f}(k,j).
    \end{aligned}
\end{equation*}
That $C \cup D = \R^{3n+1}$ ensures maximal solutions to \eqref{eq:TTSA_simulator} are complete, with the $\tau$ automaton additionally guaranteeing completeness in the $k$-direction.
In the stochastic setting the quantities $\jk$ and $\kj$ are both functions of $\omega \in \Omega$, generated from $\dom\Xb(\omega)$ as in \eqref{eq:jk_and_kj}. Adaptedness entails, for each $k \in \Zgzero$, $\Fcal_k$-measurability of the set-valued mapping $\omega \mapsto \graph(\Xb(\omega))\cap(\{k\}\times \R \times\R^{3n+1})$. Adaptedness does not accrue automatically, given that the sets $C$ and $D$ in the simulator \eqref{eq:TTSA_simulator} overlap, allowing for non-unique sample paths and thus collections of sample paths that are not adapted. In short, adaptedness amounts to asking that both the state value $\Xb(\omega)$ and the decision to flow or jump at $(k,j)$ are $\Fcal_{k}$-measurable for each $(k,j) \in \dom\Xb(\omega)$.

\begin{assumption}\label{ass:TTSA_steps}
    The step sizes $\{H_k\}_{k=1}^{\infty} = \{(h_{s,k},h_{f,k})\}_{k=1}^\infty$ are two-timescale admissible and there exists a $p \in [1,\infty)$ for which $\sum_{k=1}^\infty h_{f,k}^{1+p} < \infty$.
\end{assumption}

Under the conditions of Assumptions~\ref{ass:Psi_prop}~and~\ref{ass:TTSA_steps}, the two-timescale system that corresponds to \eqref{eq:TTSA_simulator} is
\begin{equation}\label{eq:TT_HHBsys}
    \begin{alignedat}{2}
        (\chi,\xi) &\in C \quad &&\begin{cases}\dchi = \fH(\chi,\xi) \\ \eps\dxi = -\xi + \nabla \Psi(q) \end{cases} \\
        (\chi,\xi) &\in D \quad &&(\chip,\xip) = (\gH(\chi), \xi).
    \end{alignedat}
\end{equation}
Let $\dx = \FTHBe(x)$ and $\xp = \GTHB(x)$ denote the overall flow and jump dynamics, and $\HTHBe = (C,F_{\mathrm{2HB}}^{\eps},D,G_{\mathrm{2HB}})$ denote \eqref{eq:TT_HHBsys}. The data of $\HTHB$ satisfies Assumption~\ref{ass:basic_conditions}. 

\begin{lemma}\label{lemma:TTSA_sim}
    Let $\Psi$ satisfy Assumption~\ref{ass:Psi_prop}, $\{H_k\}_{k=1}^\infty$ satisfy Assumption~\ref{ass:TTSA_steps}, $\Xb$ be a solution of \eqref{eq:TTSA_simulator}, and $\Omegawhat \subset \Omega$ be such that, for almost every $\omega \in \Omegawhat$, $\Xb(\omega)$ is bounded and complete. Then, for almost every $\omega \in \Omegawhat$, the pair $(\Xb(\omega),\{H_k\}_{k=1}^\infty)$ is a two-timescale asymptotic simulation of $\HTHB$.
\end{lemma}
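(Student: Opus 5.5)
Fix $\omega\in\Omegawhat$ outside a null set to be determined. The path $\Xb(\omega)$ is bounded and complete by hypothesis, and $\{H_k\}_{k=1}^\infty$ is two-timescale admissible by Assumption~\ref{ass:TTSA_steps}. It therefore remains to verify items 1) and 2) of Definition~\ref{def:two_timescale_sim} for $\HTHB$.

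\textbf{Jump condition.} Item 2) is immediate. Whenever $(k,j),(k,j+1)\in\dom\Xb(\omega)$ we have $\Xb(k,j)\in D$ and $\Xb(k,j+1)=\GTHB(\Xb(k,j))$. Hence every pair $(\Xb(\kj,j),\Xb(\kj,j+1))$ lies in $\graph(\GTHB)\cap(D\times\R^{3n+1})$, which is closed because $D$ is closed and $\GTHB$ is continuous. Its outer limit therefore lies in $\graph(G_D)$, which gives \eqref{eq:two_timescale_G_containment}.

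\textbf{Flow condition: choice of $f_k$.} Item 1) applies because completeness in the $k$-direction is guaranteed by the $\tau$ automaton. Set
\[f_{s,k}:=\fH(\xb_s(k,\jk),\xb_f(k,\jk)),\qquad f_{f,k}:=\nabla\Psi(\qb(k,\jk))-\xb_f(k,\jk).\]
Both are bounded along the path, since $\Xb(\omega)$ is bounded and $\fH$, $\nabla\Psi$ are continuous. Every pair $(\Xb(k,\jk),f_k)$ has $\Xb(k,\jk)\in C$ and $f_k=F^1_{\mathrm{2HB}}(\Xb(k,\jk))$. Since $C$ is closed and $F^1_{\mathrm{2HB}}$ is continuous, accumulation points lie in $\graph(F^1_C)$, which gives \eqref{eq:two_timescale_F_containment}.

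The slow closeness condition \eqref{eq:slow_closeness} holds trivially, because $\fwhat_{s,i+1}=f_{s,i}$ exactly. For the fast component,
\[\fwhat_{f,i+1}-f_{f,i}=\nabla\Psi_{\yb_{i+1}}(\qb(i,\ji))-\nabla\Psi(\qb(i,\ji))=:U_{i+1}.\]

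\textbf{Main obstacle: the noise term.} The hard step is \eqref{eq:fast_closeness}, i.e.\ showing $\sum h_{f,i+1}U_{i+1}$ has vanishing tails on windows of fast time $T$. Adaptedness makes $\qb(i,\ji)$ $\Fcal_i$-measurable. Since $\yb_{i+1}$ is independent of $\Fcal_i$ and uniform on $\{1,\dots,N\}$, this gives $\Exc{U_{i+1}}{\Fcal_i}=0$, so $\{U_{i}\}$ is a martingale difference sequence.

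Boundedness of $\Xb(\omega)$ holds only pathwise on $\Omegawhat$, so I would localize. For $\rho>0$, define the truncated sequence $U^\rho_{i+1}:=U_{i+1}\mathbf{1}\{|\Xb(i,\ji)|\le\rho\}$. Each $\nabla\Psi_i$ is continuous, so $U^\rho_{i+1}$ is uniformly bounded by some $c_\rho$, and it remains a martingale difference sequence.

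I would then follow the standard stochastic-approximation argument (Benaïm's Proposition 4.2, as used in \cite[Sec.~6]{GoebelChainTransitivity2026}). Applying the Burkholder (or Doob) inequality with exponent $2p$ to the martingale $M_k:=\sum_{i<k}h_{f,i+1}U^\rho_{i+1}$ over windows $\Ical_{f,n,T}$ yields a bound of order $T^{p}\sum h_{f,i+1}^{1+p}$ on each window. This is summable over a partition of fast time into consecutive windows of length $T$, by Assumption~\ref{ass:TTSA_steps}. Borel--Cantelli then gives \eqref{eq:fast_closeness} almost surely for the truncated sequence. If $p\ge1$ makes $\sum h_{f}^2<\infty$, a direct $L^2$ martingale convergence argument suffices instead.

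Taking $\rho\in\N$ and intersecting the countably many full-measure events, almost every $\omega\in\Omegawhat$ has a bounded path with some $\rho$ exceeding its bound. For such $\omega$ we have $U^\rho=U$ along the path, which yields \eqref{eq:fast_closeness} and completes the verification.
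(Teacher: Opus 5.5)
Your proposal is correct and follows essentially the paper's route: the same choice of $f_k$ as the conditional mean of the increments (so the slow noise vanishes identically and only the fast martingale-difference term remains), the same closed-graph reasoning for \eqref{eq:two_timescale_F_containment} and \eqref{eq:two_timescale_G_containment}, and the same reliance on $\sum_k h_{f,k}^{1+p}<\infty$ together with a noise moment bound that is uniform on bounded sets. The only difference is presentational: the paper verifies $\Exc{|\vb_{f,k+1}|^{2p}}{\Fcal_k}\le\gamma(|\Xb(k,\jk)|)$ and invokes \cite[Prop.~6.5]{GoebelChainTransitivity2026}, whereas you unpack that cited result directly via truncation at level $\rho$, a Burkholder-type window estimate, and Borel--Cantelli.
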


\begin{theorem}
    Let the conditions of Lemma~\ref{lemma:TTSA_sim} hold. Then, for almost every $\omega \in \Omegawhat$, the pair $(\xb_{s}(\omega),\{h_{s,k}\}_{k=1}^\infty)$ is an asymptotic simulation of $\HHB$, with $\olim(\xb_{s}(\omega)) \subset \Mcal$.
\end{theorem}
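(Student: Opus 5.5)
The proof combines Lemma~\ref{lemma:TTSA_sim} with Theorem~\ref{theorem:HBL_sim} and Lemma~\ref{lemma:HBL_omegalim} to build an explicit $\Lambda$, and then applies Theorem~\ref{theorem:HRL_sim} to $\HTHB$. Fix $\omega\in\Omegawhat$ outside the null set from Lemma~\ref{lemma:TTSA_sim}. Then $(\Xb(\omega),\{H_k\})$ is a two-timescale asymptotic simulation of $\HTHB$, and by Theorem~\ref{theorem:HBL_sim} $(\Xb(\omega),\{h_{f,k}\})$ is an asymptotic simulation of the boundary layer system. That system has flow map $\{0\}\times(-\xi+\nabla\Psi(q))$ and the same jumps; note that the jumps leave $\xi$ and $q$ unchanged.

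Let $K$ be a compact set containing the range of $\Xb(\omega)$. By Lemma~\ref{lemma:HBL_omegalim}, $\olim(\Xb(\omega))\subset\OmegaBLK(K)$. The main work is to show that
\[
\OmegaBLK(K)\subset\{(\chi,\xi):\xi=\nabla\Psi(q)\}.
\]
Along any solution of the restricted boundary layer system, $q$ is constant during both flows and jumps, and $\xi$ is constant across jumps. During flows, $|\xi-\nabla\Psi(q)|$ decays like $e^{-t}$. The difficulty is that hybrid time $t+j\to\infty$ could be achieved through jumps alone, with no flow.

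To rule this out I would use the $\tau$ automaton. The map $\gH$ resets $\tau$ to $0$. From $\tau=0$ the solution must flow until $\tau\ge T$, because $D$ requires $\tau\ge T$; this is legitimate since every state with $\tau\le T$ lies in $C$. Hence each two consecutive jumps are separated by at least $T$ units of flow time. Consequently $t_i+j_i\to\infty$ forces $t_i\to\infty$, uniformly over initial conditions in $K$. Since $K$ is bounded, the decay gives the claim. A subtle point is that the restriction to $K$ could create maximal solutions that stop early, but the $\Omega$-limit definition only uses points with $t_i+j_i\to\infty$, so this causes no issue.

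Now set
\[
\graph(\Lambda):=\{(\chi,\xi)\in K:\ \xi=\nabla\Psi(q)\}.
\]
This set is compact because $\nabla\Psi$ is continuous, and it contains $\olim(\Xb(\omega))$. Theorem~\ref{theorem:HRL_sim}, whose hypotheses are supplied by Lemma~\ref{lemma:reduced_regularity}, then shows that $(\xb_s(\omega),\{h_{s,k}\})$ is an asymptotic simulation of $\HRL$.

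The next step is to check that the data of $\HRL$ are contained in the data of $\HHB$. Substituting $\xi=\nabla\Psi(q)$ into $C$ gives $\CH$, and into $D$ gives $\DH$, so $\CR\subset\CH$ and $\DR\subset\DH$. The map $\LRC(\chi)$ is the single point $\nabla\Psi(q)$, and $\fH(\chi,\cdot)$ is continuous, so $\FR(\chi)=\fH(\chi,\nabla\Psi(q))$ with no convexification issue. Likewise $\GR=\gH$.

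The conditions of Definition~\ref{def:asymptotic_simulation} only require that outer limits lie in $\graph(F_C)$ and $\graph(G_D)$. These graphs only grow when $\HRL$ is enlarged to $\HHB$, so $(\xb_s(\omega),\{h_{s,k}\})$ is an asymptotic simulation of $\HHB$. Proposition~\ref{prop:aymptotic_sim_omega} then shows that $\olim(\xb_s(\omega))$ is weakly $\Scal_{\HHB}$-invariant and internally $\Scal_{\HHB}$-chain transitive. By the GAS of $\Mcal$ noted earlier, every such set lies in $\Mcal$, so $\olim(\xb_s(\omega))\subset\Mcal$.

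I expect the main obstacle to be the $\OmegaBLK$ step: showing rigorously that $t_i\to\infty$ and that $\xi$ converges to $\nabla\Psi(q)$ uniformly on $K$.
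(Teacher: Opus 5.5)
Your proposal is correct and follows essentially the same route as the paper's proof. Both go through Lemma~\ref{lemma:TTSA_sim} and Theorem~\ref{theorem:HBL_sim}, identify the boundary-layer limit set on $K$ with $\{\xi=\nabla\Psi(q)\}$ to build $\Lambda$, then apply Theorem~\ref{theorem:HRL_sim} with reduced data contained in that of $\HHB$, and finish with Proposition~\ref{prop:aymptotic_sim_omega} and GAS of $\Mcal$. Your graph-inclusion argument for passing from $\HRL$ to $\HHB$ and your explicit treatment of $t_i\to\infty$ are, if anything, more careful than the paper's, which appeals to $\Scal_{\HRL}\subset\Scal_{\HHB}$ and simply asserts completeness in the $t$-direction.

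One simplification for the $t_i\to\infty$ step. In the boundary layer $\dot\chi=0$, so $\tau$ is frozen during flows and never climbs back to $T$. After any jump, $\tau=0<T$ permanently, so each boundary-layer solution has at most one jump. This gives $t_i\ge t_i+j_i-1\to\infty$ directly. Your dwell-time claim is still true, but it holds vacuously for this reason, not because $\tau$ grows from $0$ to $T$.
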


\begin{proof}
    By Lemma~\ref{lemma:TTSA_sim}, $(\Xb(\omega),\{H_k\}_{k=1}^{\infty})$ is a two-timescale asymptotic simulation of $\HTHB$ for almost every $\omega \in \Omegawhat$. Applying Theorem~\ref{theorem:HBL_sim}, we can state that, for almost every $\omega \in \Omegawhat$, the pair $(\Xb(\omega),\{h_{f,k}\}_{k=1}^\infty)$ is also an asymptotic simulation of the boundary layer system of $\HTHB$, i.e.,
    \begin{equation}\label{eq:TTSA_BL}
        \begin{alignedat}{2}
            (\chi,\xi) &\in C \quad &&\begin{cases}\dchi = 0 \\ \dxi = -\xi + \nabla \Psi(q) \end{cases} \\
            (\chi,\xi) &\in D \quad &&(\chip,\xip) = (\gH(\chi), \xi).
        \end{alignedat}
    \end{equation}
    Since $\Xb(\omega)$ is bounded, there exists a compact set $K \subset \R^{3n+1}$ for which $\Xb(k,j) \in K$ for all $(k,j) \in \dom\Xb(\omega)$. Take $K_{s} := \proj_{s}(K)$. 
    For any initial condition $(q_0,p_0,\tau_0,\xi_0) \in K$, complete solutions $\psi$ of \eqref{eq:TTSA_BL}, which are complete in the $t$-direction, are such that $q(t,j) \equiv q_0$ for all $(t,j) \in \dom \psi$. Additionally, $\lim_{t + j \to \infty}|\xi(t,j)|_{\nabla \Psi(q_0)} = 0$, since $\xi = \nabla\Psi(q_0)$ is GAS for the $\dxi$ dynamics, solutions are complete in the $t$-direction, and the jump map leaves $\xi$ unchanged. As such, the limit set of \eqref{eq:TTSA_BL} restricted to $K$ is contained in $\graph(\Lambda)$, given as
    \[\graph(\Lambda) = \{(\chi,\xi) 
    \in \R^{3n+1} : \xi = \nabla \Psi(q), \, \chi \in K_{s}\},\] 
    i.e., $\Lambda(\chi) = \{\nabla\Psi(q)\}$ for all $\chi \in \dom(\Lambda) = K_s$. Since $\chi \mapsto \Lambda(\chi)$ is single-valued, we substitute $\xi = \nabla \Psi(q)$ to produce the reduced flow map of $\dchi = \fH(\chi,\nabla\Psi(q))$, with a reduced jump map $\chip = \gH(\chi)$. The mapping $\LRC$ is constructed as $\graph(\LRC) = \graph(\Lambda) \cap C$, meaning $\graph(\LRC) = \{(\chi,\xi)  : \xi = \nabla\Psi(q), \inner{\xi,p} \leq 0, \tau \geq T, \chi \in K_{s}\}\cup\{(\chi,\xi) :\xi = \nabla\Psi(q),  \tau \leq T, \chi \in K_{s}\}$.
    It follows that $\dom(\LRC) = \CH\cap K_s$, with $\dom(\LRD) = \DH\cap K_s$ resulting from an analogous construction. The reduced system, $\HRL$, is therefore
    \begin{equation*}
        \begin{alignedat}{2}
            \chi &\in \CH\cap K_s \quad &&\dchi = \fH(\chi,\nabla\Psi(q)) \\
            \chi &\in \DH\cap K_s \quad &&\chip = \gH(\chi).
        \end{alignedat}
    \end{equation*}
    Since $\CH\cap K_{s} \subset \CH$, $\DH \cap K_{s} \subset \DH$, and the flow and jump maps of $\HRL$ agree with those of $\HHB$, every solution of $\HRL$ is a solution of $\HHB$. Therefore, applying Theorem~\ref{theorem:HRL_sim} and using that $\Scal_{\HRL} \subset \Scal_{\HHB}$, the pair $(\xb_{s}(\omega),\{h_{s,k}\}_{k=1}^\infty)$ is an asymptotic simulation of both $\HRL$ and $\HHB$.
    Furthermore, applying Proposition~\ref{prop:aymptotic_sim_omega}, since $\Mcal$ is GAS for $\HHB$ and therefore contains all internally $\Scal_{\HHB}$-chain transitive and weakly $\Scal_{\HHB}$-invariant sets, it follows that $\olim(\xb_{s}(\omega)) \subset \Mcal$. Since $\omega \in \Omegawhat$ was chosen arbitrarily, possibly excluding a set of measure zero, the claim holds.
\end{proof}

\bibliographystyle{IEEEtran}
\bibliography{./TTAS_cdc.bib}
\newpage

\begin{appendix}
\renewcommand{\thelemma}{\Alph{section}.\arabic{lemma}}
\setcounter{lemma}{0}

\begin{lemma}\label{lemma:supremums}
    Let the pair of step sizes $\{(h_{s,k},h_{f,k})\}_{k=1}^\infty$ be two-timescale admissible. Then, for each $T > 0$, there exists an $\ell \in \Zgzero$ such that $\Ical_{f,n,T}$ and $\Ical_{s,n,T}$ are nonempty and $\Ical_{f,n,T} \subset \Ical_{s,n,T}$ for all $n \in \Z_{\geq \ell}$.
\end{lemma}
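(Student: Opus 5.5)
Since both index sets have lower endpoint $n+1$, they are intervals of integers starting at $n+1$. The plan is therefore to prove two facts for all $n$ large: nonemptiness of $\Ical_{f,n,T}$, which amounts to $m_f(\tau_{f,n}+T) \geq n+1$, and the containment $\Ical_{f,n,T}\subset\Ical_{s,n,T}$, which amounts to $m_f(\tau_{f,n}+T) \leq m_s(\tau_{s,n}+T)$. Nonemptiness of $\Ical_{s,n,T}$ then follows from the containment.

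For nonemptiness, note that $\tau_{f,n+1} = \tau_{f,n} + h_{f,n+1}$. Since $h_{f,k}\to 0$, there is $\ell_1$ with $h_{f,n+1}\leq T$ for all $n\geq \ell_1$. For such $n$ we get $\tau_{f,n+1}\leq \tau_{f,n}+T$, and by the definition of $m_f$ as a maximum (which is finite because $\tau_{f,k}\to\infty$ by non-summability) this gives $m_f(\tau_{f,n}+T)\geq n+1$.

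For the containment, let $k\in\Ical_{f,n,T}$. Then $\sum_{i=n}^{k-1} h_{f,i+1} = \tau_{f,k}-\tau_{f,n}\leq T$. By \eqref{eq:timescale_sep} there is $\ell_2$ with $h_{s,i}\leq h_{f,i}$ for all $i\geq \ell_2$. For $n\geq \ell_2$ this gives
\[
\tau_{s,k}-\tau_{s,n}=\sum_{i=n}^{k-1}h_{s,i+1}\leq \sum_{i=n}^{k-1}h_{f,i+1}\leq T,
\]
so $\tau_{s,k}\leq \tau_{s,n}+T$. Hence $k\leq m_s(\tau_{s,n}+T)$, and since $k\geq n+1$, we conclude $k\in\Ical_{s,n,T}$. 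Taking $\ell:=\max\{\ell_1,\ell_2\}$ completes the argument.

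The only point requiring care is the monotone comparison of the partial sums. It uses only the eventual inequality $h_{s,i}\leq h_{f,i}$, which the ratio condition \eqref{eq:timescale_sep} provides with margin; any ratio bound $h_{s,i}/h_{f,i}\leq 1$ eventually suffices. I do not expect a genuine obstacle beyond this bookkeeping. One must also confirm that $m_r$ is well defined, which holds because each $\tau_{r,k}$ is strictly increasing and unbounded.
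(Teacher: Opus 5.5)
Your proof is correct and follows essentially the same route as the paper. The ratio condition gives $h_{s,i}\le h_{f,i}$ eventually, which bounds the slow partial sums by the fast ones and yields $m_f(\tau_{f,n}+T)\le m_s(\tau_{s,n}+T)$; and $h_{r,n+1}\le T$ eventually gives nonemptiness. Your element-wise containment argument, and deriving nonemptiness of $\Ical_{s,n,T}$ from the containment, are only cosmetic differences, and they spell out the partial-sum comparison that the paper leaves implicit.
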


\begin{proof}
    Fix $T > 0$. To have $\Ical_{f,n,T} \subset \Ical_{s,n,T}$ we require that $m_f(\tau_{f,n} + T) \leq m_s(\tau_{s,n} + T)$. From the definitions of $\tau_{r,n}$ and $m_{r}(\cdot)$, we have $m_{r}(\tau_{r,n}+T) \geq n$. As such,
    \[m_r(\tau_{r,n} + T) = \max\left\{k \geq n : \sum_{i=n}^{k-1} h_{r,i+1}\leq T\right\}.\]
    The two-timescale admissibility of the step sizes, specifically condition \eqref{eq:timescale_sep}, guarantees that there exists a $\nu \in \Zgzero$ such that $h_{s,k} \leq h_{f,k}$ for all $k \in \Z_{\geq \nu}$. Therefore, for all $n \in \Z_{\geq \nu}$,
    \[\max\left\{k  : \sum_{i=n}^{k-1} h_{f,i+1}\leq T\right\} \leq \max\left\{k  : \sum_{i=n}^{k-1} h_{s,i+1}\leq T\right\},\]
    i.e., $m_f(\tau_{f,n} + T) \leq m_s(\tau_{s,n} + T)$ for all $n \geq \nu$. Nonemptiness of $\Ical_{r,n,T}$ occurs when $h_{r,n+1} \leq T$. The admissibility of both sequences of step sizes guarantees the existence of an $N \in \Zgzero$ for which $h_{r,n+1} \leq T$ for all $n \geq N$ and $r \in \{s,f\}$. Therefore, both index sets are nonempty for all $n \in \Z_{\geq N}$. Taking $\ell := \max\{\nu,N\}$, establishes nonemptiness of both sets and that $\Ical_{f,n,T} \subset \Ical_{s,n,T}$ for all $n \geq \ell$. Given that our choice of $T$ was arbitrary, the claim holds.
\end{proof}

For the remainder of the work we will, with some abuse of notation, write
\[(\Phi(k,j),f_{k}) = \bigl(\phi_{s}(k,j),\phi_{f}(k,j),f_{s,k},f_{f,k}\bigr).\] 
Additionally, define for each $(r,n) \in \{s,f\}\times\Zgzero$, $k \in \Z_{\geq n+1}$, hybrid sequence $\phi_{q}$, and bounded sequence of vectors $\{f_{p,k}\}_{k=1}^\infty$ the quantity
\[\Ecal_{r,n,\{f_{p,k}\}_{k=1}^\infty}^{k,\phi_{q}} := \left|\sum_{i=n}^{k-1}h_{r,i+1}\left(\tfrac{\phi_{q}(i+1,\jmol_{i})-\phi_{q}(i,\jmol_{i})}{h_{r,i+1}}-f_{p,i}\right)\right|.\]
Note that the $r$-parameter encodes the relevant timescale, i.e., choice of step sizes $\{h_{r,k}\}_{k=1}^\infty$.

\begin{proof}[Proof of Theorem~\ref{theorem:HBL_sim}]
    Let $(\Phi,\{H_k\}_{k=1}^\infty)$ be a two-timescale asymptotic simulation of $\HTT$. For $(\Phi,\{h_{f,k}\}_{k=1}^\infty)$ to be an asymptotic simulation of $\HBL$ we require that, if $\Phi$ is complete in the $k$-direction, there exists a bounded sequence of vectors $\{\fwtild_{k}\}_{k=0}^\infty$ such that
    \begin{equation}\label{eq:fasttimescale_subset}
        \limsup_{k \to \infty}\left(\Phi(k,\jk),\fwtild_{k}\right) \subset \graph(\Fwtild_C),
        \vspace{-4pt}
    \end{equation}
    and, for each $T > 0$, $\displaystyle \lim_{n \to \infty}\sup_{k \in \Ical_{f,n,T}}\Ecal_{f,n,\{\fwtild_{k}\}_{k=1}^\infty}^{k,\Phi} = 0$, i.e.,
    \vspace{-5pt}
    \begin{equation}\label{eq:two_timscale_limsup}
        \lim_{n \to \infty}\sup_{k \in \Ical_{f,n,T}}\left|\sum_{i=n}^{k-1}h_{f,i+1}\left(\tfrac{\Phi(i+1,\ji)-\Phi(i,\ji)}{h_{f,i+1}}-\fwtild_{i}\right)\right| = 0.
    \end{equation}
    Take $\{f_{k}\}_{k=0}^\infty$ to be the bounded sequence of vectors from Definition~\ref{def:two_timescale_sim} associated with $(\Phi,\{H_k\}_{k=1}^\infty)$, and define
    \[\fwtild_{s,k} :=
        \frac{h_{s,k+1}}{h_{f,k+1}}f_{s,k}, \quad \fwtild_{f,k} := f_{f,k}, \quad \fwtild_{k} := \begin{bmatrix}
        \fwtild_{s,k} \\ \fwtild_{f,k}
    \end{bmatrix}.\]
    The boundedness of $\{f_{k}\}_{k=1}^\infty$ and $\lim_{k\to\infty}h_{s,k}/h_{f,k} = 0$ ensures that $\{\fwtild_{k}\}_{k=1}^\infty$ is bounded. Fix an accumulation point
    \[(\xi_s,\xi_f,f_s^\star,f_f^\star) \in \limsup_{k \to \infty}(\Phi(k,\jk),f_{k}),\] i.e., there exists a subsequence $\{k_n\}_{n=1}^\infty$ along which $\lim_{n \to \infty}(\Phi(k_n,\jmol_{k_n}),f_{k_n}) = (\xi_s,\xi_f,f_s^\star,f_f^\star)$. Using the boundedness of $\{f_{s,k}\}_{k=1}^\infty$ and \eqref{eq:timescale_sep}, we have that
    \[\begin{aligned}
        \lim_{n \to \infty}\fwtild_{k_n} 
        = \lim_{n \to \infty}\left(\frac{h_{s,k_n+1}}{h_{f,k_n+1}}f_{s,k_n},f_{f,k_n}\right) = (0,f_f^\star)
    \end{aligned},\]
    implying $\lim_{n \to \infty}(\Phi(k_n,\jmol_{k_n}),\fwtild_{k_n}) = (\xi_s,\xi_f,0,f_f^\star)$.
    From \eqref{eq:two_timescale_F_containment}, we know $(\xi_s,\xi_f,f_s^\star,f_f^\star) \in \graph(F^{1}_C)$, and, using the definition of $\Fwtild$, we can state that $(\xi_s,\xi_f,0,f_f^\star) \in \graph(\Fwtild_C)$
    for all convergent subsequences $\{k_n\}_{n=1}^\infty$, meaning \eqref{eq:fasttimescale_subset} is satisfied.
    Note that \eqref{eq:two_timscale_limsup} is equivalent to having 
    \begin{subequations}
        \begin{equation}\label{eq:slow_closeness_fast_timescale}
            \lim_{n \to \infty}\sup_{k \in \Ical_{f,n,T}}\Ecal_{f,n,\{\fwtild_{s,k}\}_{k=1}^\infty}^{k,\phi_{s}} = 0
        \end{equation}
        \begin{equation}\label{eq:fast_closeness_fast_timescale}
            \lim_{n \to \infty}\sup_{k \in \Ical_{f,n,T}}\Ecal_{f,n,\{\fwtild_{f,k}\}_{k=1}^\infty}^{k,\phi_{f}} = 0
        \end{equation}
    \end{subequations}
    hold for all $T > 0$. That \eqref{eq:fast_closeness_fast_timescale} holds is immediate given the assumption of \eqref{eq:fast_closeness}. It must be shown that an assumption of \eqref{eq:slow_closeness} implies \eqref{eq:slow_closeness_fast_timescale}. The argument of the supremum in \eqref{eq:slow_closeness} may be written as
    \[\begin{aligned}
        \left|\sum_{i=n}^{k-1}h_{f,i+1}\left(\frac{\phi_{s}(i+1,\ji)-\phi_{s}(i,\ji)}{h_{f,i+1}} - \frac{h_{s,i+1}}{h_{f,i+1}}f_{s,i}\right)\right|,
    \end{aligned}\]
    making \eqref{eq:slow_closeness} of Definition~\ref{def:two_timescale_sim} equivalent to
    \begin{equation}\label{eq:slow_closeness_fast_slowclock}
        \lim_{n \to \infty}\sup_{k \in \Ical_{s,n,T}}\Ecal_{f,n,\{\fwtild_{s,k}\}_{k=1}^\infty}^{k,\phi_{s}} = 0.
    \end{equation}
    The condition \eqref{eq:slow_closeness_fast_timescale} can therefore be claimed by showing that \eqref{eq:slow_closeness_fast_slowclock} holds with $\Ical_{f,n,T}$ in place of $\Ical_{s,n,T}$. Note that, for all $n \in \Zgzero$, $T > 0$, and $r \in \{s,f\}$, the quantity $\Ecal_{f,n,\{\fwtild_{s,k}\}_{k=1}^\infty}^{k,\phi_{s}}$ is non-negative for each $k \in \Ical_{r,n,T}$. Applying Lemma~\ref{lemma:supremums} and the monotonicity of the supremum, for each $T > 0$ there exists an $\ell \in \Zgzero$ such that, for all $n \in \Z_{\geq \ell}$,
    \[0 \leq \sup_{k \in \Ical_{f,n,T}}\Ecal_{f,n,\{\fwtild_{s,k}\}_{k=1}^\infty}^{k,\phi_{s}} \leq \sup_{k \in \Ical_{s,n,T}}\Ecal_{f,n,\{\fwtild_{s,k}\}_{k=1}^\infty}^{k,\phi_{s}}.\]
   Using \eqref{eq:slow_closeness_fast_slowclock}, we can therefore conclude
    \begin{equation}\label{eq:bl_fast_limit}
        \lim_{n \to \infty}\sup_{k \in \Ical_{f,n,T}}\Ecal_{f,n,\{\fwtild_{s,k}\}_{k=1}^\infty}^{k,\phi_{s}} = 0.
    \end{equation}
    Since \eqref{eq:slow_closeness} was assumed for all $T > 0$, \eqref{eq:bl_fast_limit} holds for all $T > 0$, establishing \eqref{eq:slow_closeness_fast_timescale}. Finally, if $\Phi$ is complete in the $j$-direction, the property \eqref{eq:two_timescale_G_containment} holds automatically. Consequently, the pair $(\Phi,\{h_{f,k}\}_{k=1}^\infty)$ satisfies \eqref{eq:asymptotic_sim_F_containment}, \eqref{eq:asymptotic_sim_closeness}, and \eqref{eq:asymptotic_sim_G_containment} for $\HBL$, and is therefore an asymptotic simulation of $\HBL$.
\end{proof}

\begin{proof}[Proof of Lemma~\ref{lemma:HBL_omegalim}]
    Since $\Phi(k,j) \in K$ for all $(k,j) \in \dom\Phi$, the asymptotic simulation of $\HBL$ generated by $(\Phi,\{h_{f,k}\}_{k=1}^\infty)$ must also be one of $\HBLK$. By \cite[Prop.~6.28]{HDS_Book}, the set $\OmegaBLK(K)$ is the minimal global attractor for $\HBLK$. Consequently, all weakly $\Scal_{\HBLK}$-invariant and internally $\Scal_{\HBLK}$-chain transitive sets must be contained within. By Proposition~\ref{prop:aymptotic_sim_omega}, $\omega(\Phi)$ is such a set, meaning $\omega(\Phi) \subset \OmegaBLK(K)$.
\end{proof}

\begin{proof}[Proof of Lemma~\ref{lemma:reduced_regularity}]
    The sets $\CR$ and $\DR$ are compact (and hence closed) by the compactness of $\graph(\Lambda)$ and the closedness of $C$ and $D$.
    That $\CR = \dom(\LRC)$, $\DR = \dom(\LRD)$, $F_s$ is nonempty on $C$, and $G_s$ is nonempty on $D$ ensures that both $F_s(x_s,\LRC(x_s))$ and $G_s(x_s,\LRD(x_s))$ are nonempty for all $x_s \in \CR$ and $x_s \in \DR$, respectively.
    An application of \cite[Prop.~5.52(a),(b)]{VariationalAnalysis} guarantees that the respective compositions of the set-valued mappings $F_s$ and $G_s$ with $\LRC$ and $\LRD$ preserve outer semicontinuity and local boundedness, meaning both $\GR : \Rns \tto \Rns$ and $x_s \mapsto F_s(x_s,\LRC(x_s))$ are outer-semicontinuous and locally bounded.
    By \cite[Corollary~2.30]{VariationalAnalysis}, the closed convex hull preserves the closedness of the graph, and therefore the compactness of its range, meaning $x_s \mapsto \cch F_s(x_s,\LRC(x_s))$, i.e., $x_s \mapsto \FR(x_s)$, is locally bounded, outer-semicontinuous, and is nonempty and has convex values for each $x_s \in \CR$. As such, the data of $\HRL$ satisfies Assumption~\ref{ass:basic_conditions}. 
\end{proof}

\begin{proof}[Proof of Theorem~\ref{theorem:HRL_sim}]
    Consider the two-timescale asymptotic simulation $(\Phi,\{H_k\}_{k=1}^\infty)$ of $\HTT$. For $(\phi_{s},\{h_{s,k}\}_{k=1}^\infty)$ to constitute an asymptotic simulation of $\HRL$, there must exist a bounded sequence of vectors $\{\fbar_{s,k}\}_{k=1}^\infty$ for which 
    \begin{equation}\label{eq:slowtimescale_subset}
        \limsup_{k \to \infty}\left(\phi_{s}(k,\jk),\fbar_{s,k}\right) \subset \graph(\FR)\cap(\CR\times \Rns)
    \end{equation}
    and 
    \begin{equation}\label{eq:reduced_sys_closeness}
        \lim_{n \to \infty}\sup_{k \in \Ical_{s,n,T}}\left|\sum_{i=n}^{k-1}h_{s,i+1}(\fwhat_{s,i+1}-\fbar_{s,i})\right| = 0,
    \end{equation}
    for all $T > 0$.
    Take $\{\fbar_{s,k}\}_{k=1}^\infty = \{f_{s,k}\}_{k=1}^\infty$, which ensures that conditions \eqref{eq:two_timescale_F_containment} and \eqref{eq:slow_closeness} hold. It is immediate that with this choice of $\{\fbar_{s,k}\}_{k=1}^\infty$, \eqref{eq:slow_closeness} implies \eqref{eq:reduced_sys_closeness}. For \eqref{eq:slowtimescale_subset} to hold, it must be shown that all accumulation points of $\{(\phi_{s}(k,\jk),f_{s,k})\}_{k=1}^\infty$ are contained in $\graph(\FR)\cap(\CR\times\Rns)$. Fix a point $(\xi_s,f_s^\star) \in \limsup_{k \to \infty}(\phi_{s}(k,\jk),f_{s,k})$ and its corresponding subsequence $\{k_n\}_{n=1}^\infty$. By virtue of $\{\Phi(k,\jk)\}_{k=1}^\infty$ and $\{f_{k}\}_{k=1}^\infty$ being bounded, pass, without relabeling, to a further subsequence of $\{k_n\}_{n=1}^\infty$ satisfying $\lim_{n \to \infty}(\Phi(k_n,\jmol_{k_n}),f_{k_n}) = (\xi_s,\xi_f,f_s^\star,f_f^\star)$.
    By the definition of the outer limit and \eqref{eq:two_timescale_F_containment},
    \begin{equation}\label{eq:subseq_limits}
        (\xi_s,\xi_f,f_s^\star,f_f^\star) \in  \graph(F^{1}_C),
    \end{equation}
    implying $(\xi_s,\xi_f) \in C$ and $(f_s^\star,f_f^\star) \in F^{1}(\xi_s,\xi_f)$. Combining \eqref{eq:limsup_k_A} and \eqref{eq:subseq_limits} yields $(\xi_s,\xi_f) \in \graph(\Lambda)\cap C$, and consequently $\xi_s \in \CR$ and $\xi_f \in \LRC(\xi_s)$. As such,
    \[f_s^\star \in F_s(\xi_s,\xi_f) \subset F_s(\xi_s,\LRC(\xi_s))  \subset \FR(\xi_s).\]
    Since the above conclusions apply for all accumulation points $(\xi_s,\xi_f,f_s^\star,f_f^\star)$, the sequence $\{(\phi_{s}(k,\jk),f_{s,k})\}_{k=1}^\infty$ satisfies \eqref{eq:slowtimescale_subset}. If $\Phi$ is complete in the $j$-direction, we can combine \eqref{eq:two_timescale_G_containment} and \eqref{eq:limsup_j_A}, and use a similar argument to get $\limsup_{j \to \infty}(\phi_{s}(\kj,j),\phi_{s}(\kj,j+1)) \subset \graph(\GR)\cap(\DR \times \Rns)$. With the necessary conditions met, we may state that $(\phi_{s},\{h_{s,k}\}_{k=1}^\infty)$ is an asymptotic simulation of $\HRL$.
\end{proof}

\begin{proof}[Proof of Corollary~\ref{corr:final_limit}]
    Proposition~\ref{prop:aymptotic_sim_omega} and Theorem~\ref{theorem:HRL_sim} allow us to claim that $\omega(\phi_{s})$ is a nonempty, compact set that is internally $\Scal_{\HRL}$-chain transitive and weakly $\Scal_{\HRL}$-invariant. Let $\Kcal$ be a compact set containing every such set. Fix $(\xi_s,\xi_f) \in \omega(\Phi)$. Since $\omega(\phi_{s}) = \proj_{s}(\omega(\Phi))$, we have $\xi_s \in \omega(\phi_{s})\subset \Kcal$. Furthermore, since $(\xi_s,\xi_f)\in\graph(\Lambda)$, we can state that $(\xi_s,\xi_f)\in\graph(\Lambda)\cap(\Kcal\times\Rnf)$. Given $(\xi_s,\xi_f)$ was selected arbitrarily, the claim holds.
\end{proof}

\begin{proof}[Proof of Lemma~\ref{lemma:TTSA_sim}]
    Adapting \cite[Sec.~6.3]{GoebelChainTransitivity2026}, define, for all $(k,\jk),(k+1,\jk) \in \dom\Xb(\omega)$, the quantity
    \[\fbwhat_{r,k+1} := \frac{\xb_{r}(k+1,\jk)-\xb_{r}(k,\jk)}{h_{r,k+1}}.\]
    The data of \eqref{eq:TTSA_simulator} is such that every maximal solution is complete in the $k$-direction, meaning $\fbwhat_{r,k+1}$ is well defined for almost all $\omega \in \Omega$ and $k \in \Zgzero$. Let 
    \[\fb_{r,k} := \Exc{\fbwhat_{r,k+1}}{\Fcal_{k}}, \quad \vb_{r,k+1} := \fbwhat_{r,k+1} - \fb_{r,k},\]
    and $\Fb_{k} := (\fb_{s,k},\fb_{f,k})$.
    For a bounded $\Xb(\omega)$ to be a two-timescale asymptotic simulation of $\HTHB$ we require that
    \begin{equation}\label{eq:TTSA_limsup}
        \limsup_{k \to \infty}(\Xb(k,\jk),\Fb_k) \subset \graph((\FTHB)_C).
    \end{equation}
    This is a stochastic analogue of \eqref{eq:two_timescale_F_containment}.
    Additionally, \cite[Asm.~6.4]{GoebelChainTransitivity2026} must be met, i.e., there must exist, for $r \in \{s,f\}$, both a $p \in [1,\infty)$ satisfying $\sum_{k=1}^\infty h_{r,k}^{1+p} < \infty$ and a continuous, nondecreasing function $\gamma : \Rgzero \to \Rgzero$ where
    \begin{equation}\label{eq:TTSA_variance}
        \Exc{|\vb_{r,k+1}|^{2p}}{\Fcal_k} \leq \gamma(|\Xb(k,\jk)|) \quad \forall k \in \Zgzero.
    \end{equation}
    By \cite[Prop.~6.5]{GoebelChainTransitivity2026} the dual requirement on \eqref{eq:TTSA_variance} and $\{h_{r,k}\}_{k=1}^\infty$ endows sample paths $\Xb(\omega)$ with an equivalent of condition \eqref{eq:two_timescale_closeness}. These conditions must be shown to hold for almost every $\omega \in \Omegawhat$. 
    Given that the slow dynamics are deterministic, $\fbwhat_{s,k+1} = \fb_{s,k}$, meaning $\vb_{s,k+1} = 0$ and the $r=s$ conditions are vacuous. We may therefore restrict our attention to $\fb_{f,k}$, $\{h_{f,k}\}_{k=1}^\infty$, and $\vb_{f,k+1}$. Since $\Exc{\nabla \Psi_{\yb_{k+1}}(q)}{\Fcal_k} = \nabla\Psi(q)$, we have $\fb_{f,k} = \nabla \Psi(\qb(k,\jk)) - \xb_{f}(k,\jk)$, meaning
    \[\Fb_k = \begin{bmatrix}\fb_{s,k} \\ \fb_{f,k} \end{bmatrix} = \begin{bmatrix}\fH(\xb_{s}(k,\jk),\xb_{f}(k,\jk)) \\ \nabla \Psi(\qb(k,\jk)) - \xb_{f}(k,\jk) \end{bmatrix}.\]
    As such, whenever $\Xb(k,\jk) \in C$, we have $(\Xb(k,\jk),\Fb_k) \in \graph((\FTHB)_C)$, satisfying \eqref{eq:TTSA_limsup}. Let $p \in [1,\infty)$ be as given by Assumption~\ref{ass:TTSA_steps}. Since for all $k \in \Zgzero$ it holds that
    \[|\vb_{f,k+1}|^{2p} \leq \max_{i \in \{1,\ldots,N\}}|\nabla \Psi_{i}(\qb(k,\jk)) - \nabla \Psi(\qb(k,\jk))|^{2p},\]
    we may write $\Exc{|\vb_{f,k+1}|^{2p}}{\Fcal_k} \leq B_p(\qb(k,\jk))$, where
    \[B_p(q) :=  \max_{i \in \{1,\ldots,N\}}|\nabla \Psi_{i}(q) - \nabla \Psi(q) |^{2p}.\]
    The continuity of $q \mapsto B_p(q)$ follows from $\Psi \in \Ccal^1$. Taking $\gamma(r) := \max_{|q|\leq r}B_{p}(q)$, we have that $\gamma$ is continuous and non-decreasing. Therefore, for all $k \in \Zgzero$,
    \[\Exc{|\vb_{f,k+1}|^{2p}}{\Fcal_k} \leq B_p(\qb(k,\jk))\leq \gamma(|\Xb(k,\jk)|),\]
    since $|\qb(k,\jk)|\leq|\Xb(k,\jk)|$. As such, \eqref{eq:TTSA_variance} is satisfied.
    Given that the jump maps of \eqref{eq:TTSA_simulator} and $\HTHB$ are identical, the asymptotic simulation conditions are trivially satisfied for the jump dynamics. Therefore, for almost every $\omega \in \Omegawhat$, the pair $(\Xb(\omega),\{H_k\}_{k=1}^\infty)$ is a two-timescale asymptotic simulation of $\HTHB$.
\end{proof}

\end{appendix}

\end{document}